\journal{arXiv}
\newtheorem{definition}{Definition}
\newtheorem{corollary}{Corollary}
\newtheorem{theorem}{Theorem}
\newtheorem*{theorem*}{Theorem}
\newtheorem{lemma}{Lemma}
\newtheorem{remark}{Remark}
\newtheorem{example}{Example}
\begin{document}

\begin{frontmatter}



\title{Randomized $p$-values for multiple testing and their application in replicability analysis}

\author{Anh-Tuan Hoang}
\author{Thorsten Dickhaus\corref{cor1}}
\cortext[cor1]{Institute for Statistics, University of Bremen, 
P. O. Box 330 440, 28344 Bremen, Germany. Tel: +49 421 218-63651. E-mail address: dickhaus@uni-bremen.de (Thorsten Dickhaus).}

\address{Institute for Statistics, University of Bremen, Germany}

\begin{abstract}
We are concerned with testing replicability hypotheses for many endpoints simultaneously. This constitutes a multiple test problem with composite null hypotheses. Traditional $p$-values, which are computed under least favourable parameter configurations, are over-conservative in the case of composite null hypotheses. As demonstrated in prior work, this poses severe challenges in the multiple testing context, especially when one goal of the statistical analysis is to estimate the proportion $\pi_0$ of true null hypotheses.
Randomized $p$-values have been proposed to remedy this issue. In the present work, we discuss the application of randomized $p$-values in replicability analysis. In particular, we introduce a general class of statistical models for which valid, randomized $p$-values can be calculated easily. By means of computer simulations, we demonstrate that their usage typically leads to a much more accurate estimation of $\pi_0$. Finally, we apply our proposed methodology to a real data example from genomics.
\end{abstract}

\begin{keyword}
Hazard ratio order \sep Meta-analysis \sep Proportion of true null hypotheses \sep Schweder--Spj{\o}tvoll estimator

\MSC[2010] 62J15 \sep 62G30
\end{keyword}
\end{frontmatter}


\section{Introduction} \label{sec:introduction}
The replication of scientific results is essential for their acceptance by the scientific community. In order to judge whether a scientific result has been replicated in an independent study, appropriate scientific methods are needed. We are concerned with developing such methods by formalizing the replication as a statistical hypothesis which has to be tested with an appropriate procedure. In particular, a simultaneous replicability analysis for many endpoints or markers, respectively, requires specialized multiple test procedures. We propose the usage of randomized $p$-values, as introduced by \cite{dickhaus2013randomized}, in this context.

For a single hypothesis test based on a test statistic $T(X)$, where $X$ is the observable random variable, mathematically representing the data set, a (non-randomized) $p$-value $p(X)$ is a (deterministic) transformation of $T(X)$ onto $[0,1]$. Small values of $p(X)$ indicate incompatibility of the observed data with the null hypothesis $H$ of interest. When basing test decisions on the $p$-value, type I error control at any pre-defined significance level $\alpha\in(0,1)$ is then equivalent to
\begin{equation} \label{eq:validityintro}
\text{pr}_{\vartheta}(p(X)\leq\alpha)\leq\alpha,\;\alpha\in(0,1),\vartheta\in H,
\end{equation}
where $\text{pr}_{\vartheta}$ denotes the probability measure under the parameter $\vartheta$ of the statistical model under consideration. A $p$-value fulfilling \eqref{eq:validityintro} is called a valid $p$-value.

In case of composite null hypotheses $H$, $p$-values are required to fulfill condition \eqref{eq:validityintro} under all parameter values $\vartheta_{0}\in H$. Hence, it is of interest to determine parameter values in $H$ which maximize the probability in \eqref{eq:validityintro}. These are called least favourable parameter configurations (LFCs). Under continuity assumptions, the $p$-value will usually be uniformly distributed under LFCs. However, if $\vartheta\in H$ is not an LFC, we typically have a strict inequality in condition \eqref{eq:validityintro} for many values of $\alpha \in (0,1)$. 

In the context of simultaneous testing of multiple null hypotheses, this deviation from the uniform distribution is problematic when utilizing data-adaptive multiple tests that rely on a pre-estimation of the proportion $\pi_{0}$ of the true null hypotheses. Non-uniformity can for example be caused by the presence of composite null hypotheses, as described before, or by the discreteness of the model. Randomized $p$-values resulting from a data-dependent mixing of the original $p$-value and an additional on $[0,1]$ uniformly distributed random variable $U$, that is stochastically independent of the data $X$, are then often considered in the literature. The distribution of the randomized $p$-values under the null is typically much closer to uniformity than that of the non-randomized ones. In case of discrete models randomized $p$-values for simple null hypotheses $H = \{\vartheta^*\}$ have been discussed, among others, by \cite{fistra2007,habiger2011randomised,RandoFisher,habiger2015multiple}. These randomized $p$-values are closely related to well-known randomized hypothesis tests in discrete models, and they are exactly uniformly distributed under $\vartheta^*$. For composite null hypotheses $H$, even in non-discrete models, it is generally not possible to achieve exact uniformity without abandoning the data completely. \cite{dickhaus2013randomized} proposed one set of data-dependent weights for the mixing of $X$ and $U$, that works well for composite, one-sided null hypotheses at least in certain location parameter models.

Due to irreproducibility, randomized $p$-values are not suitable for the final decision making. However, as demonstrated by \cite{RandoFisher}, \cite{dickhaus2013randomized} and others, they are very useful in the context of estimating the proportion $\pi_{0}$ of true null hypotheses. One popular estimator for $\pi_0$ has been proposed by \cite{schweder1982plots}. We will denote this estimator by $\hat{\pi}_{0} \equiv \hat{\pi}_{0}(\lambda)$, where $\lambda\in[0,1)$ is a tuning parameter, and will refer to $\hat{\pi}_{0}$ as the Schweder-Spjøtvoll estimator. The proposal is to utilize randomized $p$-values in $\hat{\pi}_{0}$. Since validity of the $p$-values utilized in $\hat{\pi}_{0}$ is essential for conservative estimations $\hat{\pi}_{0}$ (see Lemma $1$ in \cite{RandoFisher}), we will provide some sufficient conditions for the validity of randomized $p$-values in the sequel.

We will be particularly interested in replicability analysis, where one aims at identifying discoveries made across more than one of $s\geq 2$ given independent studies. The null hypothesis of no replication is a special type of a composite null hypothesis. While a typical meta-analysis (see, e.\ g., \cite{Kulinskaya}) pools the available data across the studies, replicability analysis requires findings to hold in at least $\gamma$ studies, where $2\leq\gamma\leq s$ is a pre-defined parameter. This is an important distinction, since in a meta-analysis, one extremely small $p$-value may suffice to produce a small combined $p$-value, regardless of the evidence contributed by the other studies. Instead of combining all (endpoint-specific) $p$-values from the $s$ studies, replicability analysis will usually apply a combination of all but the $\gamma-1$ smallest of these $p$-values. In the context of bio-marker identification we consider $s\geq 2$ independent studies that examine $m\geq 2$ endpoints as possible bio-markers for a given disease. Whether one endpoint constitutes a bio-marker may differ between the studies, since the latter are (usually) conducted under different settings like different (sub-)populations. It is of interest to find bio-markers that are associated with the disease in at least $\gamma$ different settings to rule out findings that can only be ascribed to one specific study setup. With our proposed methodology, it is possible to accurately estimate the number of replicated bio-markers. This is of interest in itself, but can also be used to increase the power of a multiple test for replicability. More details are provided in Sections \ref{sec:replicabilityanalysis} and \ref{sec:estimationtruenullhypotheses}.

Simultaneous testing of multiple replicability statements has also been the focus in prior literature. \cite{benjamini2009selective} made use of partial conjunction nulls, meaning that at least a pre-specified number of the (study-specific) null hypotheses for a given endpoint are true, see also \cite{benjamini2008screening}. They propose combining the $s-\gamma+1$ largest $p$-values for each endpoint in an appropriate manner, and then using an FDR controlling procedure on these partial conjunction $p$-values. \cite{bogomolov2013discovering} presented algorithms that separate $s=2$ studies into primary and follow-up study. An empirical Bayesian approach has been proposed by \cite{heller2014replicability}. \cite{heller2014deciding} introduced the $r$-value for each hypothesis, which indicates the lowest significance level with respect to the false discovery rate \citep{benjamini1995controlling} at which the corresponding hypothesis can be rejected. This  allows for a ranking among the examined features. \cite{bogomolov2018assessing} proposed to first select the promising features from each study separately and then to test the selected features.

\section{Model setup}
\label{sec:modelsetup}
In the following, we introduce a general model for which randomized $p$-values are easily computable. The parameter $s$ will be the number of studies, and the parameter $m\geq 2$ the number of endpoints (potential bio-markers) which also equals the number of null hypotheses. In the examples in Sections$~\ref{sec:modelsetup}$ and $\ref{sec:randomizedpvalues}$ we only consider the case of $s=1$, which can be interpreted as bio-marker identification without replicability requirements. In Section$~\ref{sec:replicabilityanalysis}$, where we introduce replicability analysis, we only consider $s\geq 2$.

Consider a statistical model $(\Omega,\mathcal{F},(\text{pr}_{\vartheta})_{\vartheta\in\Theta})$ and let $\theta=(\theta_{1},\ldots,\theta_{m}):\Theta\to\Theta '$ denote a derived parameter, in which $\Theta'=\Theta_{1}'\times\cdots\times\Theta_{m}'$ is a subset of $\mathbb{R}^{sm}=\mathbb{R}^{s}\times\cdots\times \mathbb{R}^{s}$, $s\geq 1$, where $\mathbb{R}$ denotes the set of real numbers. We assume that consistent and, at least asymptotically, unbiased estimators $\hat{\theta}_{j}=(\hat{\theta}_{1,j},\ldots,\hat{\theta}_{s,j}):\Omega\to \mathbb{R}^{s}$, for $\theta_{j}(\vartheta)=(\theta_{1,j}(\vartheta),\ldots,\theta_{s,j}(\vartheta))$ are available $(j=1,\ldots,m)$.

We consider $m$ null hypotheses and their corresponding alternatives given by ${\theta_{j}(\vartheta)\in H_{j}\;\:\text{versus}\;\;\theta_{j}(\vartheta)\in K_{j}=\Theta_{j}'\setminus H_{j}}$, where $H_{j}$ and $K_{j}$ are non-empty subsets of $\Theta_{j}'$ and Borel sets of $\mathbb{R}^{s}\ (j=1,\ldots, m)$.

Furthermore, we assume that marginal tests $\varphi_{j}$ for testing $H_{j}$ against $K_{j}$ are constructed as $\varphi_{j}(x)=\textbf{1}\{T_{j}(x)\in\Gamma_{j}(\alpha)\}$, where $\Gamma_{j}(\alpha)$ denotes a rejection region, $\alpha\in(0,1)$ denotes a fixed, local significance level, $x\in\Omega$ an observation, and $T_{j}:\Omega\to \mathbb{R}$ a measurable mapping such that the test statistic $T_{j}(X)$ has a continuous cumulative distribution function under any $\vartheta\in\Theta\ {(j=1,\ldots,m)}$. We often write $\hat{\theta}_{j}$ or $T_{j}$ instead of $\hat{\theta}_{j}(X)$ and $T_{j}(X)$, respectively ${(j=1,\ldots,m)}$.

The following general assumptions are made:
\begin{description}
\item[$(GA1)$] For all $j=1,\ldots,m$, there exists a constant $c_{j}\in[0,1]$, such that ${\{x\in\Omega:\;T_{j}(x)\in\Gamma_{j}(c_{j})\}=\{x\in\Omega:\;\hat{\theta}_{j}(x)\in K_{j}\}}$ holds.
\item[$(GA2)$] Nested rejection regions: for every $j=1,\ldots,m$ and $\alpha'<\alpha$, it holds $\Gamma_{j}(\alpha')\subseteq\Gamma_{j}(\alpha)$.
\item[$(GA3)$] For every $j=1,\ldots,m$ and $\alpha\in(0,1)$, it holds $\underset{\vartheta:\theta_{j}(\vartheta)\in H_{j}}{\mathrm{sup}}\text{pr}_{\vartheta}(T_{j}\in\Gamma_{j}(\alpha))=\alpha$. 
\item[$(GA4)$] For every $j=1,\ldots,m$, the set of LFCs for $\varphi_{j}$, i.e. the set of parameters that yield the supremum in $(GA3)$, does not depend on $\alpha$. 
\end{description}
The conditions $(GA2)-(GA4)$ are the same as required for the models in \cite{dickhaus2013randomized}, whereas for assumption $(GA1)$ in \cite{dickhaus2013randomized} only the condition ${\{x\in\Omega:\;T_{j}(x)\in\Gamma_{j}(\alpha)\}\subseteq\{x\in\Omega:\;\hat{\theta}_{j}(x)\in K_{j}\}}$ for $\alpha$ small enough has to be met.

Assumption $(GA1)$ serves as a connection between the test statistic $T_{j}(X)$ and the estimator $\hat{\theta}_{j}(X)$, for $1 \leq j \leq m$. It requires, that $\{\hat{\theta}_{j}\in K_{j}\}$ is in itself a rejection event at level $c_{j}$. Furthermore, assumption $(GA2)$ together with $(GA1)$ implies
\begin{align}
\label{eq:GA1add}
\left.
\begin{array}{l l}
\{x\in\Omega:\;T_{j}(x)\in\Gamma_{j}(\alpha)\}\subseteq\{x\in\Omega:\;\hat{\theta}_{j}(x)\in K_{j}\},\ & \alpha<c_{j}\\ 
\{x\in\Omega:\;T_{j}(x)\in\Gamma_{j}(\alpha)\}\supseteq\{x\in\Omega:\;\hat{\theta}_{j}(x)\in K_{j}\},\ & \alpha>c_{j}
\end{array}
\right.
\end{align}
for all $x\in\Omega\ (j=1,\ldots,m)$. Assumption $(GA3)$ means that under any LFC for $\varphi_{j}$ the rejection probability is exactly $\alpha$.

LFC-based $p$-values for the marginal tests $\varphi_{j}$ are formally defined as
\begin{equation*}
p_{j}^{LFC}(X)=\underset{\{\tilde{\alpha}\in(0,1):T_{j}(x)\in\Gamma_{j}(\tilde{\alpha})\}}{\mathrm{inf}}\;\underset{\{\vartheta:\theta_{j}(\vartheta)\in H_{j}\}}{\mathrm{sup}}\text{pr}_{\vartheta}(T_{j}(X)\in\Gamma_{j}(\tilde{\alpha})).
\end{equation*}

Under assumptions $(GA2)$ -- $(GA4)$, we obtain that
\[p_{j}^{LFC}(X)={\mathrm{inf}\{\tilde{\alpha}\in(0,1):T_{j}(X)\in\Gamma_{j}(\tilde{\alpha})\}}\ (j=1,\ldots,m).
\] 
Such LFC-based $p$-values $p_{j}^{LFC}(X)$ are uniformly distributed on $[0,1]$ under any LFC for $\varphi_{j}$ \citep[Lemma $3.3.1$]{lehmann2006testing}. If $\Gamma_{j}(\alpha)=(F_{j}^{-1}(1-\alpha),\infty)$, where $F_{j}$ is the cumulative distribution function of $T_{j}(X)$ under an LFC for $\varphi_{j}$, the above definition leads to $p_{j}^{LFC}(X)=1-F_{j}(T_{j}(X))$.

\begin{example}
\label{ex:generalassumptions} 
Models $1$ and $2$ in \cite{dickhaus2013randomized} are one-sided normal means models that fulfill the general assumptions $(GA1)$ -- $(GA4)$. Notice that the indices $i$ in \cite{dickhaus2013randomized} correspond to the indices $j$ in our notation, and that the dimension $s$ of the derived parameters is one in both models.

\cite{dickhaus2013randomized} showed that the general assumptions $(GA2)$ -- $(GA4)$ hold. Our stricter assumption $(GA1)$ follows in both models from the fact that the estimators $\hat{\theta}_{j}(X)$ are positive, i.e. inside the alternative, if and only if the test statistics $T_{j}(X)$ are positive, which is equivalent to $T_{j}(X)\in\Gamma_{j}(1/2)$, such that $c_{j}=1/2$ in $(GA1)\ (j=1,\ldots, m)$.
\end{example}


\section{\texorpdfstring{The randomized $p$-values}{The randomized p-values}}
\label{sec:randomizedpvalues}
\subsection{General properties}
\label{sec:generalproperties}

Let $U_{1},\ldots,U_{m}$ be stochastically independent and identically, uniformly distributed on $[0,1]$, such that each $U_{j}$ is stochastically independent of $X$. We obtain randomized $p$-values $p_{j}^{rand}$ by mixing $U_{j}$ and $p_{j}^{LFC}$ in a data-dependent manner, specifically 
\begin{equation*}
p_{j}^{rand}(X,U_{j})=w_{j}(X)~U_{j}+(1-w_{j}(X))~G_{j}\big(p_{j}^{LFC}(X)\big),
\end{equation*}
where $G_{j}$ is a suitable function necessary for the validity of the randomized $p$-values and ${0\leq w_{j}(x)\leq 1}$ are data-dependent weights $(j=1,\ldots,m)$.

We consider the choice $w_{j}(X)=\textbf{1}_{H_{j}}\{\hat{\theta}_{j}(X)\}\ (j=1,\ldots,m)$, which follows the definition of randomized $p$-values as introduced in \cite{dickhaus2013randomized}.
\begin{definition}
\label{def:randomized}\ \\
We define randomized $p$-values as follows
\begin{equation*}
p_{j}^{rand}(X,U_{j})=U_{j}~\textbf{1}_{H_{j}}\{\hat{\theta}_{j}(X)\}+G_{j}\big(p_{j}^{LFC}(X)\big)~\textbf{1}_{K_{j}}\{\hat{\theta}_{j}(X)\},
\end{equation*}
where $G_{j}$ denotes the conditional cumulative distribution function of $p_{j}^{LFC}(X)$ given the event $\{\hat{\theta}_{j}\in K_{j}\}$ under any LFC for $\varphi_{j}\ (j=1,\ldots,m)$. 
\end{definition}
Ideally, we want $p$-values to be uniformly distributed on $[0,1]$ under null hypotheses. For a fixed $j=1,\ldots,m$, we therefore set $p_{j}^{rand}(X)=U_{j}$ if $\hat{\theta}_{j}(X)\in H_{j}$ holds. Due to \eqref{eq:GA1add}, $\varphi_{j}(x)=0$ whenever $\hat{\theta}_{j}(x)\in H_{j}$ holds, when applying a local significance level $\alpha<c_{j}$. This means that in case of $\hat{\theta}_{j}(x)\in H_{j}$ we cannot reject $H_{j}$ at a significance level lower than $c_{j}$. Since $c_{j}$ can be very large, e.\ g. $1/2$ in Example~\ref{ex:generalassumptions} and even larger in our models for replicability analysis in Section~\ref{sec:replicabilityanalysis}, we can, in practice, assume that $H_{j}$ is true in case of $\hat{\theta}_{j}(X)\in H_{j}$, and switch to a uniform variate $U_{j}$ that has the desired properties for a $p$-value under $H_{j}$.

In the following theorem we give formulas for the calculation of the function $G_{j}$ and the randomized $p$-value $p_{j}^{rand}\ (j=1,\ldots,m)$.

\begin{theorem}
\label{thm:1}\ \\
Let $j\in\{1,\ldots,m\}$ be fixed and $\vartheta_{0}\in\Theta$ with $\theta_{j}(\vartheta_{0})\in H_{j}$ be any LFC for $\varphi_{j}$. Under assumptions $(GA1)$ -- $(GA4)$ we obtain the following.
\begin{description}
\item[$1.$] It holds that $c_{j}=\text{pr}_{\vartheta_{0}}(\hat{\theta}_{j}(X)\in K_{j})$.
\item[$2.$] The conditional cumulative distribution function $G_{j}$ of $p_{j}^{LFC}(X)$ given $\hat{\theta}_{j}\in K_{j}$ is a piecewise linear function in $t\in[0,1]$, more precisely it holds $G_{j}(t)= t \textbf{1}_{[0,c_{j}]}(t) / c_j +\textbf{1}_{(c_{j},1]}(t)$.
\item[$3.$] The randomized $p$-values, as defined in Definition~\ref{def:randomized}, are of the form
\begin{equation*}
p_{j}^{rand}(X,U_{j})=U_{j}~\textbf{1}_{H_{j}}\{\hat{\theta}_{j}(X)\}+p_{j}^{LFC}(X)\,c_{j}^{-1}\,\textbf{1}_{K_{j}}\{\hat{\theta}_{j}(X)\}.
\end{equation*}
\end{description}
\end{theorem}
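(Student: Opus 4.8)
The plan is to play off assumption $(GA1)$ --- which identifies the estimation event $\{\hat{\theta}_{j}\in K_{j}\}$ with the rejection event $\{T_{j}\in\Gamma_{j}(c_{j})\}$ --- against the two facts recalled just before the theorem, namely that $p_{j}^{LFC}(X)=\inf\{\tilde{\alpha}\in(0,1):T_{j}(X)\in\Gamma_{j}(\tilde{\alpha})\}$ and that $p_{j}^{LFC}(X)$ is uniformly distributed on $[0,1]$ under the LFC $\vartheta_{0}$. The key preliminary step I would carry out is to establish the set relations
\begin{equation*}
\{\hat{\theta}_{j}\in K_{j}\}=\{x\in\Omega:\;T_{j}(x)\in\Gamma_{j}(c_{j})\},\qquad \{p_{j}^{LFC}(X)<c_{j}\}\subseteq\{T_{j}\in\Gamma_{j}(c_{j})\}\subseteq\{p_{j}^{LFC}(X)\le c_{j}\}.
\end{equation*}
The first is exactly $(GA1)$; the two inclusions follow directly from the infimum formula for $p_{j}^{LFC}$ together with the nestedness in $(GA2)$. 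Since $p_{j}^{LFC}(X)\sim U[0,1]$ under $\vartheta_{0}$, the discrepancy set $\{p_{j}^{LFC}(X)=c_{j}\}$ is $\text{pr}_{\vartheta_{0}}$-null, so $\{\hat{\theta}_{j}\in K_{j}\}$ and $\{p_{j}^{LFC}(X)\le c_{j}\}$ coincide up to a $\text{pr}_{\vartheta_{0}}$-null set.

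Part~$1$ then drops out in one line: either invoke $(GA3)$--$(GA4)$ at the admissible level $\alpha=c_{j}$ (using that, by $(GA4)$, $\vartheta_{0}$ still attains the supremum there) to get $\text{pr}_{\vartheta_{0}}(T_{j}\in\Gamma_{j}(c_{j}))=c_{j}$, and combine with $(GA1)$; or compute $\text{pr}_{\vartheta_{0}}(\hat{\theta}_{j}\in K_{j})=\text{pr}_{\vartheta_{0}}(p_{j}^{LFC}(X)\le c_{j})=c_{j}$ directly from uniformity. For Part~$2$, write $G_{j}(t)=\text{pr}_{\vartheta_{0}}(p_{j}^{LFC}(X)\le t\mid\hat{\theta}_{j}\in K_{j})$, replace the conditioning event by $\{p_{j}^{LFC}(X)\le c_{j}\}$ via the preliminary step, and distinguish the cases $t\le c_{j}$ and $t>c_{j}$: in the first case uniformity and Part~$1$ give $t/c_{j}$, in the second the conditioning event is contained in $\{p_{j}^{LFC}(X)\le t\}$, so the conditional probability is $1$; this is precisely the asserted piecewise linear form. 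Part~$3$ is then immediate: on $\{\hat{\theta}_{j}\in K_{j}\}$ we have $p_{j}^{LFC}(X)\le c_{j}$, hence $G_{j}(p_{j}^{LFC}(X))=p_{j}^{LFC}(X)/c_{j}$, and substituting this into Definition~\ref{def:randomized} yields the closed form.

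I expect the substantive content to lie entirely in the preliminary set-identity step; the rest is conditional-probability bookkeeping. The points that need a little care, rather than a genuine obstacle, are: confirming that $G_{j}$ is well defined irrespective of which LFC is picked, which holds because $p_{j}^{LFC}(X)$ has the same $U[0,1]$ law under every LFC and $(GA4)$ makes the set of LFCs independent of $\alpha$; and the degenerate values of $c_{j}$. If $c_{j}=0$ the conditioning event is $\text{pr}_{\vartheta_{0}}$-null and the statements involving $G_{j}$ are vacuous; the relevant examples all have $c_{j}\in(0,1)$, for which division by $c_{j}$ is legitimate; and $c_{j}=1$ forces $\{\hat{\theta}_{j}\in K_{j}\}$ to have full probability under $\vartheta_0$ and collapses the formula to $G_{j}(t)=t$, i.e.\ back to the non-randomized $p$-value.
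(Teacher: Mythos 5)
Your proposal is correct and follows essentially the same route as the paper: the paper's proof also rests on $(GA1)$ together with the implications $\hat{\theta}_{j}(x)\in K_{j}\Rightarrow p_{j}^{LFC}(x)\leq c_{j}$ and $p_{j}^{LFC}(x)\leq t<c_{j}\Rightarrow\hat{\theta}_{j}(x)\in K_{j}$ (your two inclusions), evaluates the numerator of the conditional probability by the same case split on $t$ versus $c_{j}$ using uniformity of $p_{j}^{LFC}(X)$ under the LFC, and derives Part~3 by the same substitution. Your extra remarks on well-definedness across LFCs and the degenerate values of $c_{j}$ go slightly beyond what the paper writes out but do not change the argument.
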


Since $p_{j}^{LFC}(x)<c_{j}$ implies $\hat{\theta}_{j}(x)\in K_{j}$, and $p_{j}^{LFC}(x)>c_{j}$ implies $\hat{\theta}_{j}(x)\in H_{j}$, for all $x\in\Omega$, we have
\begin{equation*}
p_{j}^{rand}(x,u_{j})=u_{j}\,\textbf{1}_{(c_{j},1]}\{p_{j}^{LFC}(x)\}+p_{j}^{LFC}(x)\,c_{j}^{-1}\,\textbf{1}_{[0,c_{j})}\{p_{j}^{LFC}(x)\}
\end{equation*}
for any $x\in\Omega$ and $u_{j}\in[0,1]$, when disregarding the case $p_{j}^{LFC}(x)=c_{j}$, for which $p_{j}^{rand}(x,u_{j})$ is either $1$ or $u_{j}\ (j=1,\ldots,m)$.

\begin{example}
\label{ex:randomizedpvalues}
We apply Theorem~\ref{thm:1} to both models in Example~\ref{ex:generalassumptions}. In both models it holds, that $p_{j}^{LFC}(x)<t$ is equivalent to $T_{j}(x)\in\Gamma_{j}(t)$ for all $x\in\Omega$ and $t\in[0,1]$. In particular, $p_{j}^{LFC}(x)<c_{j}$ is equivalent to $\hat{\theta}_{j}(x)\in K_{j},\;x\in\Omega$, such that $\textbf{1}_{H_{j}}\{\hat{\theta}_{j}(X)\}$ and $\textbf{1}_{K_{j}}\{\hat{\theta}_{j}(X)\}$ in Part $3$ of Theorem~\ref{thm:1} can be replaced by $\textbf{1}_{[c_{j},1]}\{p_{j}^{LFC}(X)\}$ and $\textbf{1}_{[0,c_{j})}\{p_{j}^{LFC}(X)\}$, respectively. Let $j=1,\ldots,m$ be fixed.
\begin{description}
\item[$1.$] (Multiple Z-tests model) From Theorem $1$ it follows that $c_{j}=\text{pr}_{\vartheta_{0}}(\hat{\theta}_{j}\in K_{j})=1/2$, and 
\begin{align*}
G_{j}(t)&=2\,t\,\textbf{1}_{[0,\frac{1}{2}]}(t)+\textbf{1}_{(\frac{1}{2},1]}(t),& &t\in[0,1]\\
p_{j}^{rand}(x,u_{j})&=u_{j}\textbf{1}_{(\frac{1}{2},1]}\big\{p_{j}^{LFC}(x)\big\}+2\,p_{j}^{LFC}(x)\textbf{1}_{[0,\frac{1}{2}]}\big\{p_{j}^{LFC}(x)\big\},& &x\in\Omega,u_{j}\in[0,1].
\end{align*}
\item[$2.$] (Multiple $t$-tests model) Analogously to the multiple Z-tests model, it follows that $c_{j}=1/2$ and
\begin{align*}
G_{j}(t)&=2\,t\,\textbf{1}_{[0,\frac{1}{2}]}(t)+\textbf{1}_{(\frac{1}{2},1]}(t),& &t\in[0,1]\\
p_{j}^{rand}(x,u_{j})&=u_{j}\textbf{1}_{(\frac{1}{2},1]}\big\{p_{j}^{LFC}(x)\big\}+2\,p_{j}^{LFC}(x)\textbf{1}_{[0,\frac{1}{2}]}\big\{p_{j}^{LFC}(x)\big\},& &x\in\Omega,u_{j}\in[0,1],
\end{align*}
directly from Theorem~\ref{thm:1}.
\end{description}
These results agree with the calculations in \cite[pp.1971, 1973]{dickhaus2013randomized}.
\end{example}

\subsection{\texorpdfstring{Conditions for the validity of the randomized $p$-values}{Conditions for the validity of the randomized p-values}}
\label{sec:conditionsvalidity}
As mentioned before, valid $p$-values are usually required for a conservative estimation of the proportion $\pi_{0}$ of true null hypotheses, particularly if the Schweder-Spj{\o}tvoll estimator $\hat{\pi}_{0}$ is applied. This section provides some conditions for the validity of the randomized $p$-values as defined in Definition~\ref{def:randomized} for our model setup.
\begin{theorem}
\label{thm:2}\ \\
Let $j\in\{1,\ldots,m\}$ be fixed. Under the general assumptions $(GA1)$ -- $(GA4)$, assume that $p_{j}^{LFC}(X)$ has a continuous and strictly increasing cumulative distribution function under any $\vartheta\in\Theta$. Then, the randomized $p$-value $p_{j}^{rand}$, as defined in Definition~\ref{def:randomized}, is a valid $p$-value if and only if 
\begin{equation*}
\text{pr}_{\vartheta}\big(T_{j}(X)\in\Gamma_{j}(z)\big)\leq z\,\frac{\text{pr}_{\vartheta}(\hat{\theta}_{j}\in K_{j})}{\text{pr}_{\vartheta_{0}}(\hat{\theta}_{j}\in K_{j})},\quad 0\leq z\leq \text{pr}_{\vartheta_{0}}(\hat{\theta}_{j}\in K_{j}),
\end{equation*}
for any $\vartheta\in\Theta$ with $\theta_{j}(\vartheta)\in H_{j}$ and for any LFC $\vartheta_{0}$ for $\varphi_{j}$.
\end{theorem}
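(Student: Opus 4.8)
The plan is to compute the cumulative distribution function of $p_j^{rand}$ under an arbitrary $\vartheta\in\Theta$ with $\theta_j(\vartheta)\in H_j$, and to demand that this cdf lies above the identity on $[0,1]$, which is exactly the validity condition $\text{pr}_\vartheta(p_j^{rand}\le\alpha)\le\alpha$. First I would use Part~3 of Theorem~\ref{thm:1} together with the reformulation given right after it, namely $p_j^{rand}(x,u_j)=u_j\,\textbf{1}_{(c_j,1]}\{p_j^{LFC}(x)\}+p_j^{LFC}(x)\,c_j^{-1}\,\textbf{1}_{[0,c_j)}\{p_j^{LFC}(x)\}$, noting that the boundary event $\{p_j^{LFC}=c_j\}$ has probability zero by the assumed continuity of the cdf of $p_j^{LFC}(X)$. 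Writing $q:=\text{pr}_\vartheta(\hat\theta_j\in K_j)=\text{pr}_\vartheta(p_j^{LFC}(X)<c_j)$ (using $(GA1)$ and $(GA2)$, i.e.\ \eqref{eq:GA1add}), and abbreviating $c:=c_j=\text{pr}_{\vartheta_0}(\hat\theta_j\in K_j)$ from Part~1 of Theorem~\ref{thm:1}, I would split into the event $\{p_j^{LFC}<c\}$, on which $p_j^{rand}=p_j^{LFC}/c\in[0,1)$, and its complement, on which $p_j^{rand}=U_j$ is an independent uniform.

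The key computation is then, for $\alpha\in(0,1)$,
\begin{align*}
\text{pr}_\vartheta\big(p_j^{rand}\le\alpha\big)
&=\text{pr}_\vartheta\big(p_j^{LFC}(X)/c\le\alpha,\ p_j^{LFC}(X)<c\big)+\text{pr}_\vartheta\big(U_j\le\alpha,\ p_j^{LFC}(X)>c\big)\\
&=\text{pr}_\vartheta\big(p_j^{LFC}(X)\le c\alpha\big)+\alpha\,(1-q),
\end{align*}
where in the first term I used $c\alpha<c$ so the constraint $p_j^{LFC}<c$ is redundant, and in the second term the independence of $U_j$ and $X$. Since $p_j^{LFC}(X)<z$ is equivalent to $T_j(X)\in\Gamma_j(z)$ under $(GA2)$ (the infimum definition of $p_j^{LFC}$ together with nested rejection regions, and continuity makes the distinction between $<$ and $\le$ immaterial), the first term equals $\text{pr}_\vartheta\big(T_j(X)\in\Gamma_j(c\alpha)\big)$. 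Hence validity, $\text{pr}_\vartheta(p_j^{rand}\le\alpha)\le\alpha$ for all $\alpha\in(0,1)$, is equivalent to $\text{pr}_\vartheta\big(T_j(X)\in\Gamma_j(c\alpha)\big)\le\alpha-\alpha(1-q)=\alpha q$ for all $\alpha\in(0,1)$. Substituting $z=c\alpha$, so that $\alpha=z/c$ ranges over $(0,c)$ as $\alpha$ ranges over $(0,1)$ — and including the endpoint $z=c=\text{pr}_{\vartheta_0}(\hat\theta_j\in K_j)$ by continuity/monotonicity — this reads $\text{pr}_\vartheta\big(T_j(X)\in\Gamma_j(z)\big)\le (z/c)\,q=z\,\text{pr}_\vartheta(\hat\theta_j\in K_j)/\text{pr}_{\vartheta_0}(\hat\theta_j\in K_j)$ for $0\le z\le \text{pr}_{\vartheta_0}(\hat\theta_j\in K_j)$, which is precisely the claimed inequality. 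The ``only if'' direction follows by reading the chain of equivalences backwards.

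I expect the main obstacle to be bookkeeping at the boundary value $z=c_j$ and the handedness of the inequalities ($<$ versus $\le$, open versus closed rejection regions): one must invoke the hypothesis that $p_j^{LFC}(X)$ has a continuous, strictly increasing cdf under every $\vartheta$ to ensure that the events $\{p_j^{LFC}=c_j\}$ and $\{p_j^{LFC}=c_j\alpha\}$ are null, that $\{p_j^{LFC}<z\}$ and $\{T_j\in\Gamma_j(z)\}$ agree up to a null set, and that the equivalence ``$\text{pr}_\vartheta(p_j^{rand}\le\alpha)\le\alpha$ for all $\alpha$'' transfers cleanly to the stated range of $z$ without gaps. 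Strict monotonicity is what guarantees that no nontrivial sub-interval of $z$-values is vacuous, so that the quantified inequality over $0\le z\le\text{pr}_{\vartheta_0}(\hat\theta_j\in K_j)$ is genuinely equivalent to validity rather than merely implied by it. Everything else is the routine conditioning computation above, which only uses the independence of $U_j$ from $X$ and the explicit two-piece form of $p_j^{rand}$ from Theorem~\ref{thm:1}.
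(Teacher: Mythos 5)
Your proposal is correct and follows essentially the same route as the paper's proof: both decompose $\text{pr}_{\vartheta}(p_{j}^{rand}\leq t)$ according to whether $\hat{\theta}_{j}\in K_{j}$ (equivalently, up to a null set, $p_{j}^{LFC}<c_{j}$), use the independence of $U_{j}$ on the complementary event, reduce validity to the inequality $\text{pr}_{\vartheta}(p_{j}^{LFC}(X)\leq t\,c_{j})\leq t\,\text{pr}_{\vartheta}(\hat{\theta}_{j}\in K_{j})$, and then substitute $z=t\,c_{j}$ together with the identification $\text{pr}_{\vartheta}(p_{j}^{LFC}(X)\leq t)=\text{pr}_{\vartheta}(T_{j}(X)\in\Gamma_{j}(t))$. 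Your extra care with the boundary event $\{p_{j}^{LFC}=c_{j}\}$ and the $<$ versus $\leq$ distinctions is consistent with (and slightly more explicit than) the paper's argument.
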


In many applications, a rejection of $H_{j}$ after observing $T_{j}(x)$ implies a rejection of $H_{j}$ if we observe larger test values $T_{j}(y)\geq T_{j}(x)$, $x,y\in\Omega$. More specifically, the rejection regions are often of the form $\Gamma_{j}(\alpha)=(b(\alpha),\infty)$ for some non-decreasing boundary function $b:[0,1]\to \mathbb{R}$. Usually, $b(\alpha)=F^{-1}(1-\alpha),\;\alpha\in[0,1]$, where $F$ is the cumulative distribution function of $T_{j}(X)$ under an LFC for $\varphi_{j}$, such that $(GA3)$ holds. 
Among others, the models from Example~\ref{ex:generalassumptions} fulfill this condition, under which the validity of the randomized $p$-value $p_{j}^{rand}$ follows from $T_{j}(X)$ being smaller in the hazard rate order under any $\vartheta\in\Theta$ with $\theta_{j}(\vartheta)\in H_{j}$ than under an LFC for $\varphi_{j}$.

We denote the hazard rate order and the likelihood ratio order with $"\leq_{\mathrm{hr}}"$ and $"\leq_{\mathrm{lr}}"$, respectively. For a more detailed introduction to our notations we refer to the appendix.

\begin{theorem}
\label{thm:3}
Let a model as in Section~\ref{sec:modelsetup} be given and $j=1,\ldots,m$ be fixed. We assume that the rejection regions are of the form $\Gamma_{j}(\alpha)=(F^{-1}(1-\alpha),\infty),~\alpha\in[0,1]$, where $F$ is the cumulative distribution function of $T_{j}(X)$ under any LFC $\vartheta_{0}\in\Theta$ for $\varphi_{j}$.

Then the randomized $p$-value $p_{j}^{rand}$ as defined in Definition~\ref{def:randomized} is valid if it holds $T_{j}(X)^{(\vartheta)}\leq_{\mathrm{hr}}T_{j}(X)^{(\vartheta_{0})}$ for all $\vartheta\in\Theta$ with $\theta_{j}(\vartheta)\in H_{j}$ and any LFC $\vartheta_{0}$ for $\varphi_{j}$.
\end{theorem}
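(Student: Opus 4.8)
The plan is to derive Theorem~\ref{thm:3} from the characterisation in Theorem~\ref{thm:2}. Write $G_{\vartheta}$ for the cumulative distribution function of $T_{j}(X)$ under $\vartheta$, so that $F=G_{\vartheta_{0}}$ for an LFC $\vartheta_{0}$, and put $\bar{G}_{\vartheta}=1-G_{\vartheta}$ and $\bar{F}=1-F$. With the assumed rejection regions one has $p_{j}^{LFC}(X)=1-F(T_{j}(X))$ and, because $T_{j}(X)$ has a continuous distribution function under every $\vartheta$,
\[
\text{pr}_{\vartheta}\big(T_{j}(X)\in\Gamma_{j}(z)\big)=\text{pr}_{\vartheta}\big(T_{j}(X)>F^{-1}(1-z)\big)=\bar{G}_{\vartheta}\big(F^{-1}(1-z)\big),\qquad z\in[0,1].
\]
Moreover, by Part~$1$ of Theorem~\ref{thm:1}, $\text{pr}_{\vartheta_{0}}(\hat{\theta}_{j}\in K_{j})=c_{j}$, and by $(GA1)$ together with the same identity, $\text{pr}_{\vartheta}(\hat{\theta}_{j}\in K_{j})=\text{pr}_{\vartheta}(T_{j}(X)\in\Gamma_{j}(c_{j}))=\bar{G}_{\vartheta}(F^{-1}(1-c_{j}))$.

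Next I would change variables by setting $t=F^{-1}(1-z)$, so that $z=\bar{F}(t)$; writing $t_{1}:=F^{-1}(1-c_{j})$ one has $\bar{F}(t_{1})=c_{j}$, and as $z$ decreases from $c_{j}$ to $0$ the point $t$ increases from $t_{1}$ onwards. Substituting the three expressions above, the inequality characterising validity in Theorem~\ref{thm:2} becomes
\[
\bar{G}_{\vartheta}(t)\ \le\ \bar{F}(t)\,\frac{\bar{G}_{\vartheta}(t_{1})}{\bar{F}(t_{1})},\qquad t\ge t_{1},
\]
i.e.\ $t\mapsto\bar{G}_{\vartheta}(t)/\bar{F}(t)$ is bounded on $[t_{1},\infty)$ by its value at the left endpoint $t_{1}$. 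This, in turn, is immediate from the hypothesis $T_{j}(X)^{(\vartheta)}\leq_{\mathrm{hr}}T_{j}(X)^{(\vartheta_{0})}$: by the definition of the hazard rate order recalled in the appendix, that assertion says exactly that $t\mapsto\bar{F}(t)/\bar{G}_{\vartheta}(t)=\bar{G}_{\vartheta_{0}}(t)/\bar{G}_{\vartheta}(t)$ is non-decreasing, equivalently that $t\mapsto\bar{G}_{\vartheta}(t)/\bar{F}(t)$ is non-increasing; hence the displayed bound holds for every $t\ge t_{1}$ (in fact for all $t$). Applying the sufficiency part of Theorem~\ref{thm:2} then gives validity of $p_{j}^{rand}$.

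I expect the genuine work to lie not in this hazard-rate step, which is a one-line consequence of the definition, but in the surrounding bookkeeping: keeping straight the passage between the significance-level scale $z$ and the test-statistic scale $t$, handling the boundary cases $c_{j}\in\{0,1\}$ (in which $p_{j}^{rand}$ reduces to $U_{j}$ or to $p_{j}^{LFC}$), and---most delicately---checking that Theorem~\ref{thm:2} is actually applicable, that is, that $p_{j}^{LFC}(X)$ has a continuous and strictly increasing distribution function under every $\vartheta$. Continuity is inherited from that of $G_{\vartheta}$; for the strict monotonicity one leans on $F$ being strictly increasing and on $T_{j}(X)^{(\vartheta)}$ being non-degenerate with support reaching into that of $T_{j}(X)^{(\vartheta_{0})}$, the latter following from $T_{j}(X)^{(\vartheta)}\leq_{\mathrm{hr}}T_{j}(X)^{(\vartheta_{0})}$. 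Should one wish to sidestep these hypotheses, the same conclusion follows by computing directly, as in the proof of Theorem~\ref{thm:2}, that for $\alpha\in(0,1)$
\[
\text{pr}_{\vartheta}\big(p_{j}^{rand}(X,U_{j})\le\alpha\big)=\alpha\,\text{pr}_{\vartheta}(\hat{\theta}_{j}\in H_{j})+\bar{G}_{\vartheta}\big(F^{-1}(1-\alpha c_{j})\big),
\]
which uses only the continuity of $G_{\vartheta}$ and reduces the required bound $\text{pr}_{\vartheta}(p_{j}^{rand}\le\alpha)\le\alpha$ to precisely the same ratio inequality.
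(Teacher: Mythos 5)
Your proposal is correct and follows essentially the same route as the paper: both pass through the characterisation of Theorem~\ref{thm:2}, substitute from the significance-level scale to the test-statistic scale via $z=\bar{F}(t)$, and reduce the resulting inequality to the monotonicity of $t\mapsto\bar{G}_{\vartheta}(t)/\bar{G}_{\vartheta_{0}}(t)$, which is exactly the hazard rate order (the paper writes this as the two-point ratio inequality at $a$ and $a+b$ with $a=F^{-1}(1-c_{j})$). Your closing remarks on the applicability of Theorem~\ref{thm:2} and the direct computation that sidesteps its regularity hypotheses are a sensible addition rather than a divergence.
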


\begin{corollary}
\label{cor:thm3corollary}\ \\
By Theorem $1.C.2$ in \cite{shaked2007stochastic}, replacing the hazard rate order by the likelihood ratio order in Theorem~\ref{thm:3} is also sufficient for the validity of $p_{j}^{rand}$.
\end{corollary}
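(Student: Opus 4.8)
The plan is to deduce Corollary~\ref{cor:thm3corollary} directly from Theorem~\ref{thm:3} together with the standard implication ``likelihood ratio order $\Rightarrow$ hazard rate order'' for real-valued random variables. Concretely, I would argue as follows: assume $T_{j}(X)^{(\vartheta)}\leq_{\mathrm{lr}}T_{j}(X)^{(\vartheta_{0})}$ for every $\vartheta\in\Theta$ with $\theta_{j}(\vartheta)\in H_{j}$ and any LFC $\vartheta_{0}$ for $\varphi_{j}$. By Theorem~1.C.2 in \cite{shaked2007stochastic}, this implies $T_{j}(X)^{(\vartheta)}\leq_{\mathrm{hr}}T_{j}(X)^{(\vartheta_{0})}$ for the same pairs of parameters. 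Theorem~\ref{thm:3} then applies verbatim and yields that $p_{j}^{rand}$ is a valid $p$-value.

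The only point that needs a small amount of care is a cosmetic compatibility check: Theorem~1.C.2 in \cite{shaked2007stochastic} is phrased in terms of the supports of the two random variables, and the hazard rate order is most cleanly defined when the relevant supports are intervals (or at least when the survival functions do not create degenerate ratios). Here the distributions in play are the laws of $T_{j}(X)$, which by the standing model assumptions have continuous cumulative distribution functions, and in Theorem~\ref{thm:3} we have additionally assumed $\Gamma_{j}(\alpha)=(F^{-1}(1-\alpha),\infty)$ with $F$ the LFC-c.d.f.; so the hypotheses of Theorem~1.C.2 are met and the chain of orderings goes through without additional hypotheses. I would state this explicitly in one sentence so that the reader does not have to reopen \cite{shaked2007stochastic}.

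I do not anticipate a genuine obstacle: the corollary is, by design, a one-line consequence of Theorem~\ref{thm:3} and a textbook fact. The only ``work'' is to record why the cited textbook result is applicable in our setting, namely that the continuity of the c.d.f.\ of $T_{j}(X)$ (assumed throughout Section~\ref{sec:modelsetup}) together with the interval form of the rejection regions guarantees that the likelihood ratio order between the laws of $T_{j}(X)^{(\vartheta)}$ and $T_{j}(X)^{(\vartheta_{0})}$ indeed entails the hazard rate order. Hence the proof is: invoke Theorem~1.C.2 in \cite{shaked2007stochastic} to pass from $\leq_{\mathrm{lr}}$ to $\leq_{\mathrm{hr}}$, then apply Theorem~\ref{thm:3}. $\qed$
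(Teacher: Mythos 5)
Your proposal is correct and is exactly the argument the paper intends: the likelihood ratio order implies the hazard rate order by Theorem 1.C.2 of \cite{shaked2007stochastic} (restated in the paper's appendix for continuous random variables), after which Theorem~\ref{thm:3} applies directly. The paper treats this as immediate and gives no separate proof, so your one-line derivation matches its approach.
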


\begin{example}
\label{ex:thm3}
We show via Theorem~\ref{thm:3} that the randomized $p$-values as calculated in Example~\ref{ex:randomizedpvalues} are valid. Let $j\in\{1,\ldots,m\}$ be fixed.
\begin{description}
\item[$1.$] (Multiple Z-tests Model) Let $\vartheta_{0}\in\Theta$ with $\theta_{j}(\vartheta_{0})=0$ be an LFC for $\varphi_{j}$, and $\vartheta\in\Theta$ with $\theta_{j}(\vartheta)=\mu_{j}<0$. Recall that $T_{j}(X)=\bar{X}_{j}$ is normally distributed on $\mathbb{R}$ with variance $1/n_{j}$ and expected values $\mu_{j}$ and $0$ under $\vartheta$ and $\vartheta_{0}$, respectively. It is easy to show that $f_{\vartheta_{0}}(t)/f_{\vartheta}(t)$ is non-decreasing in $t$ and therefore $T_{j}(X)^{(\vartheta)}\leq_{\mathrm{lr}}T_{j}(X)^{(\vartheta_{0})}$ holds, where $f_{\vartheta}$ and $f_{\vartheta_{0}}$ denote the Lebesgue densities of $N(\mu_{j},1/n_{j})$ or $N(0,1/n_{j})$, respectively. According to Corollary~\ref{cor:thm3corollary} our randomized $p$-values $p_{j}^{rand}$ are valid in this model.

\item[$2.$] (Multiple $t$-tests Model) Now we have that $T_{j}(X)=n_{j}^{1/2}\bar{X}_{j}/S_{j}$ possesses a non-central $t$-distribution with non-centrality parameter $\tau_{j}(\vartheta)$ and $n_{j}-1$ degrees of freedom, $\tau_{j}(\vartheta)=n_{j}^{1/2}\mu_{j}/\sigma_{j}$, and $\mu_{j}=\theta_{j}(\vartheta)$.

According to \cite[p. 639]{karlin1956distributions} and \cite[p. 126]{karlin1956decision}, non-central $t$-distributions $\big(t_{\mu,\nu}\big)_{\mu\in \mathbb{R}}$ have monotone likelihood ratio, i.e. $t_{\mu_{1},\nu}\leq_{\mathrm{lr}}t_{\mu_{2},\nu}$ if and only if $\mu_{1}\leq\mu_{2}$. For $\vartheta,\vartheta_{0}\in\Theta$ with $\theta_{j}(\vartheta)\leq 0$ and $\theta_{j}(\vartheta_{0})=0$, it is $\tau_{j}(\vartheta)=n_{j}^{1/2}\theta_{j}(\vartheta)/\sigma_{j}\leq 0=\tau_{j}(\vartheta_{0})$, and therefore $T_{j}(X)^{(\vartheta)}\leq_{\mathrm{lr}}T_{j}(X)^{(\vartheta_{0})}$. According to Corollary~\ref{cor:thm3corollary} our randomized $p$-values $p_{j}^{rand}$ in this model are valid.
\end{description}
\end{example}

Under certain conditions randomized $p$-values $p_{j}^{rand}$ as defined in Definition$~\ref{def:randomized}$ are closer to $\mathrm{Uni}[0,1]$ than their LFC-based counterparts $p_{j}^{LFC}$ under the null hypothesis $H_{j}$, that is,
\[ 
\mathrm{Uni}[0,1]\leq_{\mathrm{st}}p_{j}^{rand}(X,U_{j})^{(\vartheta)}\leq_{\mathrm{st}}p_{j}^{LFC}(X,U_{j})^{(\vartheta)}
\]
or, equivalently, 
\[
\text{pr}_{\vartheta}(p_{j}^{LFC}(X,U_{j})\leq t)\leq\text{pr}_{\vartheta}(p_{j}^{rand}(X,U_{j})\leq t)\leq t
\]
for all $t\in[0,1]$ and $\vartheta\in\Theta$ with $\theta_{j}(\vartheta)\in H_{j}$.

\begin{theorem}
\label{thm:stochasticorder}
Let a model as in Section$~\ref{sec:modelsetup}$ be given and $j\in\{1,\ldots,m\}$ be fixed. If the cumulative distribution function of $p_{j}^{LFC}(X)$ is convex under $\vartheta\in\Theta$, then it holds 
\[ 
\mathrm{Uni}[0,1]\leq_{\mathrm{st}}p_{j}^{rand}(X,U_{j})^{(\vartheta)}\leq_{\mathrm{st}}p_{j}^{LFC}(X,U_{j})^{(\vartheta)}.
\] 
On the other hand, if the cumulative distribution function of $p_{j}^{LFC}(X)$ is concave under $\vartheta\in\Theta$, then it holds 
\[ 
p_{j}^{LFC}(X,U_{j})^{(\vartheta)}\leq_{\mathrm{st}}p_{j}^{rand}(X,U_{j})^{(\vartheta)}\leq_{\mathrm{st}}\mathrm{Uni}[0,1].
\] 
\end{theorem}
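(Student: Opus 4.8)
The plan is to work directly with the cumulative distribution function of $p_{j}^{rand}$ under a fixed $\vartheta\in\Theta$ with $\theta_{j}(\vartheta)\in H_{j}$, and to compare it pointwise with both the identity (the c.d.f.\ of $\mathrm{Uni}[0,1]$) and the c.d.f.\ of $p_{j}^{LFC}(X)$. Write $H(t)=\text{pr}_{\vartheta}(p_{j}^{LFC}(X)\leq t)$ for the c.d.f.\ of the LFC-based $p$-value under $\vartheta$; note $H(0)=0$ and $H(1)=1$. By Part~3 of Theorem~\ref{thm:1} and the fact that $p_{j}^{LFC}(x)<c_{j}$ forces $\hat\theta_{j}(x)\in K_{j}$ while $p_{j}^{LFC}(x)>c_{j}$ forces $\hat\theta_{j}(x)\in H_{j}$, the randomized $p$-value equals $p_{j}^{LFC}(X)/c_{j}$ on the event $\{p_{j}^{LFC}(X)<c_{j}\}$ and equals $U_{j}$ on the complementary event $\{p_{j}^{LFC}(X)>c_{j}\}$ (the boundary event has probability zero by the continuity assumption). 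Conditioning on these two events and using independence of $U_{j}$ and $X$, I would derive the explicit formula
\begin{equation*}
\text{pr}_{\vartheta}(p_{j}^{rand}\leq t)=H(c_{j}t)+t\,\bigl(1-H(c_{j})\bigr),\qquad t\in[0,1],
\end{equation*}
where the first term comes from the rescaled LFC-branch (since $p_{j}^{LFC}(X)/c_{j}\leq t \iff p_{j}^{LFC}(X)\leq c_{j}t$, and $c_{j}t\leq c_{j}$) and the second from the uniform branch (since $\text{pr}_{\vartheta}(U_{j}\leq t,\;p_{j}^{LFC}(X)>c_{j})=t(1-H(c_{j}))$).

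Having this closed form, the proof reduces to two elementary analytic comparisons. For the lower bound $\mathrm{Uni}[0,1]\leq_{\mathrm{st}}p_{j}^{rand}$, I must show $\text{pr}_{\vartheta}(p_{j}^{rand}\leq t)\leq t$, i.e.\ $H(c_{j}t)\leq t\,H(c_{j})$; writing $c_{j}t=(1-t)\cdot 0+t\cdot c_{j}$, this is exactly Jensen's inequality for the convex function $H$ applied to the points $0$ and $c_{j}$ with weights $1-t$ and $t$, using $H(0)=0$. For the upper bound $p_{j}^{rand}\leq_{\mathrm{st}}p_{j}^{LFC}$, I need $\text{pr}_{\vartheta}(p_{j}^{rand}\leq t)\geq H(t)$, i.e.\ $H(c_{j}t)-H(t)\geq -t(1-H(c_{j}))=t\,H(c_{j})-t$. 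If $c_{j}\geq 1$ there is nothing to prove (then $\hat\theta_j\in H_j$ has probability zero under the LFC, a degenerate case); assuming $c_j<1$, I would split into the cases $t\leq c_{j}$ and $t>c_{j}$ and in each case invoke convexity of $H$ — concretely, convexity gives that the chord slopes $H(c_j t)/(c_j t)$ are monotone, and combining the chord inequalities on $[0,c_j]$, $[0,t]$ (or $[0,c_j]$, $[0,1]$) with $H(1)=1$ yields the claim. The concave case follows by reversing every inequality throughout, since concavity of $H$ is convexity of $-H$.

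The only genuinely non-routine point is verifying the closed-form c.d.f.\ rigorously, i.e.\ that the two indicator events in Part~3 of Theorem~\ref{thm:1} can indeed be rewritten in terms of $p_{j}^{LFC}(X)$ relative to the threshold $c_{j}$; this uses \eqref{eq:GA1add} together with the strict monotonicity and continuity of the c.d.f.\ of $p_{j}^{LFC}(X)$ to control the boundary $\{p_{j}^{LFC}(X)=c_{j}\}$ as a null set, so that the decomposition holds $\text{pr}_{\vartheta}$-almost surely. Once the formula is in place, the stochastic-order statements are a direct consequence of Jensen's inequality and the monotonicity of chord slopes for convex (respectively concave) functions, so I expect the bulk of the write-up to be short. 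Finally I would remark that the convexity hypothesis is automatically satisfied, for instance, when $p_{j}^{LFC}(X)$ is stochastically larger than uniform with an increasing density ratio against the uniform, which links back to the hazard-rate/likelihood-ratio conditions of Theorem~\ref{thm:3} and Corollary~\ref{cor:thm3corollary}.
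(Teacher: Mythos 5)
Your proposal is correct, and it reaches the same closed-form expression for the distribution function that the paper's proof is built on, namely $\text{pr}_{\vartheta}(p_{j}^{rand}\leq t)=H(c_{j}t)+t\,(1-H(c_{j}))$ with $H$ the c.d.f.\ of $p_{j}^{LFC}(X)$ under $\vartheta$ (in the paper this is the quantity $h(c)$ evaluated at $c=c_{j}$). From that point on, however, the two arguments genuinely diverge. The paper embeds $p_{j}^{rand}$ into the one-parameter family $p(X,U_{j},c)=U_{j}\mathbf{1}\{p_{j}^{LFC}>c\}+p_{j}^{LFC}c^{-1}\mathbf{1}\{p_{j}^{LFC}\leq c\}$, which interpolates between $U_{j}$ at $c=0$ and $p_{j}^{LFC}$ at $c=1$, and shows that $c\mapsto\text{pr}_{\vartheta}(p(X,U_{j},c)\leq t)$ is monotone by differentiating: $h'(c)=t\,f_{\vartheta}(ct)-t\,f_{\vartheta}(c)$, whose sign is controlled by the monotonicity of the density $f_{\vartheta}$ (equivalently, convexity or concavity of $H$). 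You instead compare the two endpoint quantities directly: the bound against the uniform is Jensen's inequality $H(c_{j}t)\leq (1-t)H(0)+tH(c_{j})$, and the bound against $H(t)$ follows from the monotonicity of chord slopes of a convex function, e.g.\ the slope of $H$ over $[c_{j}t,\,t]$ is at most its slope over $[c_{j},\,1]$, which after multiplying by $t(1-c_{j})$ gives exactly $H(t)-H(c_{j}t)\leq t(1-H(c_{j}))$; the concave case reverses all inequalities. Your route is slightly more elementary and robust, since it needs only convexity of $H$ rather than the existence and monotonicity of a density and the legitimacy of differentiating $c\mapsto H(ct)$; the paper's route yields the marginally stronger byproduct that the whole interpolating family is stochastically monotone in $c$. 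Your case split at $t\lessgtr c_{j}$ for the second inequality is unnecessary (the single chord comparison above covers both cases), but this is a matter of presentation, not correctness.
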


\begin{remark}
\begin{description}
\item[$1.$] Under the null hypothesis $H_{j}$ the cumulative distribution function of $p_{j}^{LFC}(X)$ can never be concave.
\item[$2.$] From Theorem$~\ref{thm:stochasticorder}$, if the cumulative distribution function of $p_{j}^{LFC}(X)$ is convex under all $\vartheta\in\Theta$ with $\theta_{j}(\vartheta)\in H_{j}$, the randomized $p$-value $p_{j}^{rand}$ is a valid $p$-value.
\item[$3.$] If the rejection regions are of the form $\Gamma_{j}(\alpha)=(F^{-1}(1-\alpha),\infty)$, where $F$ is the cumulative distribution function of $T_{j}(X)$ under an LFC for $\varphi_{j}$ (cf. Theorem$~\ref{thm:3}$), the condition mentioned in the second remark is stronger than the condition in Theorem$~\ref{thm:3}$. Namely, the convexity of the cumulative distribution function of $p_{j}^{LFC}(X)$ under $\vartheta$ is equivalent to $T_{j}(X)^{(\vartheta)}\leq_{\mathrm{lr}}T_{j}(X)^{(\vartheta_{0})}$ whenever $\vartheta_{0}$ is an LFC.
\end{description}
\end{remark}

\begin{example}
For both models from our ongoing examples the cumulative distribution function of $p_{j}^{LFC}$ is convex under $H_{j}$ and concave under $K_{j}$. Therefore both conditions in Theorem$~\ref{thm:stochasticorder}$ are satisfied and $p_{j}^{rand}$ is always closer to $\mathrm{Uni}[0,1]$ than $p_{j}^{LFC}$ (in the sense of stochastic order).  
\end{example}

\section{\texorpdfstring{Randomized $p$-values in replicability analysis}{Randomized p-values in replicability analysis}} \label{sec:replicabilityanalysis}
\subsection{Model setup}
\label{sec:modelreplicability}

We come back to the framework of bio-marker identification. We want to find bio-markers that have been verified in at least $\gamma$ studies, where the parameter $\gamma\in\{2,\ldots,s\}$ is pre-defined and fixed. For $\gamma=s$, we declare discoveries replicated, only if they have been made in each considered study. Also, it is clear, that the set of false null hypotheses is non-increasing in $\gamma$.

For each endpoint $j$ and study $i$, we denote the true effect on the considered disease state by a parameter $\theta_{i,j}$, where $\theta_{i,j}>0$ mean positive effects. We consider an endpoint to be a bio-marker only if it exhibits a positive effect on the disease. This can be replaced by testing for any one fixed, directional association between the endpoint and the disease. The parameters $\theta_{i,j}$ may differ inherently in $i$ due to the different settings across the studies like different populations or different laboratory / statistical methods.

We consider the model from Section~\ref{sec:modelsetup} for $s\geq 2$. Unless stated otherwise, we only consider $\Theta'= \mathbb{R}^{sm}$, i.e. each derived parameter $\theta_{i,j}$ may take any value in $\mathbb{R}\ (i=1,\ldots,s;\ j=1,\ldots,m)$.

Before we get to constructing the test statistics $T_{j}(X)$ and the rejection regions $\Gamma_{j}(\alpha)$, we first make some requirements about the marginal model setup. This will make it easier to present sufficient conditions for the general assumptions $(GA1)$ -- $(GA4)$ from Section~\ref{sec:modelsetup}. We do not require the data for different endpoints in the same study to be independent.

For every study $i=1,\ldots,s$ and marker $j=1,\ldots,m$ we test for a positive effect size of endpoint $j$ with regard to the disease, $H_{i,j}=\{\theta_{i,j}\leq 0\}$ vs. $K_{i,j}=\{\theta_{i,j}>0\}$. We assume that a consistent and, at least asymptotically, unbiased estimator $\hat{\theta}_{i,j}:\Omega\to \mathbb{R}$ for $\theta_{i,j}(\vartheta)$ is available. Furthermore, the marginal test $\varphi_{i,j}$ for testing $H_{i,j}$ against $K_{i,j}$ is based on a test statistic $T_{i,j}(X)$ and rejection regions $\Gamma_{i,j}(\alpha)$, where $\alpha\in(0,1)$ denotes the (local) significance level, $x\in\Omega$ an observation, and $T_{i,j}:\Omega\to \mathbb{R}$ a measurable mapping such that the test statistic $T_{i,j}(X)$ has a continuous cumulative distribution function under any $\vartheta\in\Theta$. The corresponding LFC-based $p$-values are then denoted by $p_{i,j}(X)$.

For every $i=1,\ldots,s$ and $j=1,\ldots,m$ we make the following assumptions:
\begin{description}
\item[$(RA1)$] It holds $\Gamma_{i,j}(\alpha)=(F_{i,j}^{-1}(1-\alpha),\infty)$ and $p_{i,j}(X)=1-F_{i,j}\big(T_{i,j}(X)\big)$, the set of LFCs for $\varphi_{i,j}$ is $\{\vartheta\in\Theta:\theta_{i,j}(\vartheta)=0\}$, and $F_{i,j}$ denotes the cumulative distribution function of $T_{i,j}(X)$ under an LFC for $\varphi_{i,j}$.
\item[$(RA2)$] The assumptions $(GA1)-(GA4)$ are fulfilled. We denote with $c_{i,j}$ the value, that satisfies $\{x\in\Omega:T_{i,j}(x)\in\Gamma_{i,j}(c_{i,j})\}=\{x\in\Omega:\hat{\theta}_{i,j}(x)\in K_{i,j}\}=\{x\in\Omega:\hat{\theta}_{i,j}(x)>0\}$ for assumption $(GA1)$.
\item[$(RA3)$] There exists a $d_{j}\in(0,1)$ such that $p_{i,j}(x)<d_{j}$ if and only if $\hat{\theta}_{i,j}(x)>0$, for all $x\in\Omega$ and $1 \leq i \leq s$.
\item[$(RA4)$] It holds $\underset{\theta_{i,j}(\vartheta)\rightarrow\infty}{\mathrm{lim}}\text{pr}_{\vartheta}\big(F_{i,j}(T_{i,j}(X))=1\big)=1$.
\end{description}
In one-sided problems, assumption $(RA1)$ is usually fulfilled. Due to $(RA1)$ it now holds 
\begin{equation}
\label{eq:newLFCGamma}
p_{i,j}(x)<t\Longleftrightarrow T_{i,j}(x)\in\Gamma_{i,j}(t),\;x\in\Omega.
\end{equation} 
Assumption $(RA3)$ is akin to assumption $(GA1)$ from Section~\ref{sec:modelsetup}, and follows from $(RA2)$ if and only if $d_{j}=c_{1,j}=\cdots=c_{s,j}$ holds $(j=1,\ldots,m)$.

For convenience we write

\begin{equation*}
\underset{\theta_{i,j}(\vartheta)\rightarrow\infty}{\mathrm{lim}}\text{pr}_{\vartheta}(T_{i,j}(X)\leq t)=\text{pr}_{\vartheta_{1}}(T_{i,j}(X)\leq t),\;t\in \mathbb{R},
\end{equation*} 

where $\vartheta_{1}$ is such, that $\theta_{i,j}(\vartheta_{1})=\infty$, although $\vartheta_{1}$ is technically not a parameter. Assumption $(RA4)$ is equivalent to $p_{i,j}(X)$ being zero almost surely under any such $\vartheta_{1}\ (i=1,\ldots,s;\ j=1,\ldots,m)$.

For any endpoint we define replicability of a bio-marker finding as the evidence of a positive effect size in at least $\gamma$ out of the $s$ studies. Let $H_{1},\ldots,H_{m}$ be the \textit{non-replicability} null hypotheses and $K_{1},\ldots,K_{m}$ be the respective alternative hypotheses. Formally, we define 
\begin{align*}
H_{j}&=\{(\theta_{1,j},\ldots,\theta_{s,j})\in\Theta_{j}'\mid \theta_{i,j}\leq 0\;\text{for at least $s-\gamma+1$ indices $i\in\{1,\ldots,s\}$}\},\\
K_{j}&=\{(\theta_{1,j},\ldots,\theta_{s,j})\in\Theta_{j}'\mid \theta_{i,j}>0\;\text{for at least $\gamma$ indices $i\in\{1,\ldots,s\}$}\}
\end{align*}
for $j=1,\ldots,m$.
Furthermore, we define consistent and, at least asymptotically, unbiased estimators for $\theta_{j}(\vartheta)=(\theta_{1,j}(\vartheta),\ldots,\theta_{s,j}(\vartheta))$ by $\hat{\theta}_{j}=(\hat{\theta}_{1,j},\ldots,\hat{\theta}_{s,j}):\Omega\to \mathbb{R}^{s}\ (j=1,\ldots,m)$.

To make a decision about the replicability of an effect for marker $j$ we consider the ordered p-values $p_{(1),j}<\cdots<p_{(s),j}$ for the hypotheses $H_{i,j}\ (i=1,\ldots,s)$, in the $s$ studies. One plausible  approach is to look at the $\gamma$ smallest p-values and reject $H_{j}$ if these are all below a suitable threshold. We therefore define $T_{j}(X)=1-p_{(\gamma),j}(X)$ and $\Gamma_{j}(\alpha)=(F_{\mathrm{Beta(s-\gamma+1,1)}}^{-1}(1-\alpha),1]$, thus rejecting $H_{j}$ if the $\gamma$ smallest $p$-values are all below $1-F_{\mathrm{Beta(s-\gamma+1,1)}}^{-1}(1-\alpha)$, where $F_{\mathrm{Beta(s-\gamma+1,1)}}$ denotes the cumulative distribution function of the $\mathrm{Beta}(s-\gamma+1,1)$ distribution. For the LFC-based $p$-values we then have $p_{j}^{LFC}(x)=1-F_{\mathrm{Beta}(s-\gamma+1,1)}\big(T_{j}(x)\big)\ (j=1,\ldots,m)$.

Let $G_{j}$ be the conditional cumulative distribution function of $p_{j}^{LFC}(X)$ given $\hat{\theta}_{j}(X)\in K_{j}$ under any LFC $\vartheta_{0}\in\Theta$ for $\varphi_{j}$, cf. Definition~\ref{def:randomized}.  According to Theorem \ref{thm:1} with $c_{j}=1-(1-d_{j})^{n-\gamma+1}$, it holds that
\begin{equation*}
G_{j}(t)=\frac{t}{1-(1-d_{j})^{n-\gamma+1}}\textbf{1}_{[0,1-(1-d_{j})^{n-\gamma+1}]}(t)+\textbf{1}_{(1-(1-d_{j})^{n-\gamma+1},1]}(t),\quad 0\leq t\leq 1,
\end{equation*}
and 
\begin{align*}
p_{j}^{rand}(x,u_{j})=&u_{j}\textbf{1}_{[1-(1-d_{j})^{n-\gamma+1},1]}\big\{p_{j}^{LFC}(x)\big\}\\
&+\frac{p_{j}^{LFC}(x)}{1-(1-d_{j})^{n-\gamma+1}}\textbf{1}_{[0,1-(1-d_{j})^{n-\gamma+1})}\big\{p_{j}^{LFC}(x)\big\}
\end{align*}
for $x\in\Omega$ and  $0\leq u_{j}\leq 1$; see the proof of Lemma~\ref{lm:replicabilitygeneralassumptions} in \ref{sup:sec:proofs}.

\begin{lemma}
\label{lm:replicabilitygeneralassumptions}
If assumptions $(RA1)$ -- $(RA4)$ are fulfilled, the model in Section~\ref{sec:modelreplicability} satisfies the general assumptions $(GA1)$ -- $(GA4)$ from Section~\ref{sec:modelsetup}.
\end{lemma}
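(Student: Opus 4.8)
The plan is to verify the four general assumptions $(GA1)$--$(GA4)$ one at a time for the replicability model, in each case reducing the claim about the aggregated test $\varphi_j$ (built from the $\gamma$-th smallest study-specific $p$-value) to the marginal assumptions $(RA1)$--$(RA4)$ already in force for each $\varphi_{i,j}$. Throughout I fix $j$ and write $d_j$ for the common threshold from $(RA3)$. The central observation, used repeatedly, is the identity $p_j^{LFC}(x)<t \Longleftrightarrow T_j(x)\in\Gamma_j(t) \Longleftrightarrow p_{(\gamma),j}(x) > F^{-1}_{\mathrm{Beta}(s-\gamma+1,1)}(1-t)$, which follows from the definitions $T_j(X)=1-p_{(\gamma),j}(X)$, $\Gamma_j(\alpha)=(F^{-1}_{\mathrm{Beta}(s-\gamma+1,1)}(1-\alpha),1]$, and the fact that $F_{\mathrm{Beta}(s-\gamma+1,1)}$ is continuous and strictly increasing on $[0,1]$.

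For $(GA3)$, I would argue that an LFC for $\varphi_j$ is any $\vartheta_0$ with exactly $s-\gamma+1$ of the $\theta_{i,j}(\vartheta_0)$ equal to $0$ and the remaining $\gamma-1$ equal to $+\infty$ (in the limiting sense of $(RA4)$). Under such a configuration, by $(RA4)$ the $\gamma-1$ ``infinite'' studies contribute $p_{i,j}=0$ almost surely, so $p_{(\gamma),j}$ equals the smallest of the $s-\gamma+1$ remaining $p$-values, which by $(RA1)$ are i.i.d.\ $\mathrm{Uni}[0,1]$; hence $p_{(\gamma),j}\sim\mathrm{Beta}(1,s-\gamma+1)$, so $1-p_{(\gamma),j}=T_j\sim\mathrm{Beta}(s-\gamma+1,1)$, and therefore $\mathrm{pr}_{\vartheta_0}(T_j\in\Gamma_j(\alpha))=\alpha$. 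The matching upper bound — that no $\vartheta$ with $\theta_j(\vartheta)\in H_j$ exceeds this — comes from a stochastic-dominance/coupling argument: making any $\theta_{i,j}$ smaller, or any of the ``free'' coordinates finite rather than $+\infty$, can only stochastically increase each $p_{i,j}$ and hence $p_{(\gamma),j}$, which decreases $T_j$ and lowers the rejection probability. Since $H_j$ forces at least $s-\gamma+1$ of the coordinates to be $\le 0$, the supremum is attained in the limit described, giving exactly $\alpha$; this simultaneously establishes $(GA4)$, because the identified LFC set does not involve $\alpha$.

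For $(GA1)$ I would compute $c_j$. The event $\{\hat\theta_j(x)\in K_j\}$ means at least $\gamma$ of the $\hat\theta_{i,j}(x)$ are positive, which by $(RA3)$ is equivalent to at least $\gamma$ of the $p_{i,j}(x)$ being $<d_j$, i.e.\ $p_{(\gamma),j}(x)<d_j$, i.e.\ $T_j(x) > 1-d_j$, i.e.\ $p_j^{LFC}(x) < 1-F_{\mathrm{Beta}(s-\gamma+1,1)}(1-d_j)$. Using the identity above, this is precisely $\{T_j(x)\in\Gamma_j(c_j)\}$ with $c_j = 1-F_{\mathrm{Beta}(s-\gamma+1,1)}(1-d_j) = 1-(1-d_j)^{s-\gamma+1}$ (the last equality being the explicit Beta$(s-\gamma+1,1)$ c.d.f.), which matches the constant quoted before the lemma (with $n$ there denoting $s$). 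So $(GA1)$ holds with this $c_j$. Finally $(GA2)$, nestedness, is immediate: $\Gamma_j(\alpha)=(F^{-1}_{\mathrm{Beta}(s-\gamma+1,1)}(1-\alpha),1]$ and $\alpha'<\alpha$ gives $F^{-1}_{\mathrm{Beta}(s-\gamma+1,1)}(1-\alpha') \ge F^{-1}_{\mathrm{Beta}(s-\gamma+1,1)}(1-\alpha)$ by monotonicity of the quantile function, hence $\Gamma_j(\alpha')\subseteq\Gamma_j(\alpha)$.

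The main obstacle I anticipate is the upper-bound half of $(GA3)$: one must show rigorously that no parameter in the composite null $H_j$ — including configurations where some coordinates are strictly positive but finite, as long as at least $s-\gamma+1$ are $\le 0$ — beats the boundary LFC. The clean way is a monotone coupling in which the $s$ study statistics are generated from common uniforms via the quantile transforms $F^{-1}_{i,j}(\cdot\,;\vartheta)$, exploiting that each marginal family is stochastically monotone in $\theta_{i,j}$ (which is implicit in $(RA1)$ together with the one-sided structure $H_{i,j}=\{\theta_{i,j}\le 0\}$ having LFC $\theta_{i,j}=0$); then $p_{(\gamma),j}$ is coordinatewise monotone in the vector of $p_{i,j}$'s, so pushing all null coordinates up to their worst case $\theta_{i,j}=0$ and the remaining ones up to $+\infty$ maximizes $\mathrm{pr}(T_j\in\Gamma_j(\alpha))$, and equals $\alpha$ by the Beta computation. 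Care is needed to state the needed stochastic monotonicity of the marginal models as a (mild) consequence of the standing one-sided assumptions rather than an extra hypothesis; everything else is bookkeeping with order statistics and the Beta c.d.f.
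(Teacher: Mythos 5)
Your proposal is correct and follows essentially the same route as the paper's proof: the same constant $c_{j}=1-F_{\mathrm{Beta}(s-\gamma+1,1)}(1-d_{j})=1-(1-d_{j})^{s-\gamma+1}$ for $(GA1)$ via $(RA3)$, the same quantile-monotonicity argument for $(GA2)$, and the same identification of the LFC set as the permutations of $(\infty,\ldots,\infty,0,\ldots,0)$ together with the $\mathrm{Beta}(s-\gamma+1,1)$ computation for $(GA3)$ and $(GA4)$. The upper-bound step you flag as the main obstacle is resolved in the paper without a coupling and without any extra hypothesis: after using $(RA4)$ to push $\gamma-1$ free coordinates to the stochastic minimum $p_{i,j}\equiv 0$, the independence of the studies factorizes $\text{pr}_{\vartheta}\big(T_{j}(X)\in\Gamma_{j}(\alpha)\big)$ into marginal terms $\text{pr}_{\vartheta}\big(T_{i,j}(X)\in\Gamma_{i,j}(\alpha_{i,j})\big)$, each of which is maximized over $\{\theta_{i,j}\leq 0\}$ at $\theta_{i,j}=0$ by the marginal $(GA3)$--$(GA4)$ contained in $(RA2)$ together with $(RA1)$ --- which is precisely the one-sided stochastic dominance you were hoping to extract from the standing assumptions.
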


Lemma~\ref{lm:replicabilitygeneralassumptions} allows us to check the general assumptions $(GA1)$ -- $(GA4)$ of the overall model by looking at the single studies. As such, it is not difficult to provide models that fulfill $(GA1)$ -- $(GA4)$.
\begin{example}
\label{ex:replicabilitymodels}
In the following, we consider models, in which we utilize either a $Z$-test or a $t$-test for each study $i$ and endpoint $j$.
\begin{description}
\item[] Model $1$: In each study $i=1,\ldots,s$ we consider a multiple $Z$-tests model. For fixed sample sizes $n_{i,j},\ (i=1,\ldots,s;\ j=1,\ldots,m)$, we consider the observations $x\in\Omega= \mathbb{R}^{\sum_{i,j}n_{i,j}}$ as realizations of $X=\{X_{k}^{(i,j)}:i=1,\ldots,s, j=1,\ldots,m, k=1,\ldots,n_{i,j}\}$.

For each study $i$ and marker $j$ the observations $X_{1}^{(i,j)},\ldots,X_{n_{i,j}}^{(i,j)}$ are stochastically independent and identically, normally distributed on $\mathbb{R}$ with expected value $\theta_{i,j}(\vartheta)$ and variance $1$, where $\vartheta\in\Theta$ is the underlying parameter. It is $\Theta= \mathbb{R}^{sm}$, where we denote the parameters by $\vartheta=(\mu_{i,j}: 1\leq i\leq s,1\leq j\leq m)$, such that $\theta_{i,j}(\vartheta)=\mu_{i,j}$.

As before, we test the null hypotheses $H_{i,j}=\{\mu_{i,j}\leq 0\}$ against the alternatives $K_{i,j}=\{\mu_{i,j}>0\}$. A consistent and unbiased estimator for $\mu_{i,j}$ is $\hat{\theta}_{i,j}(X)=\bar{X}_{i,j}= n_{i,j}^{-1} \sum_{k=1}^{n_{i,j}}X_{k}^{(i,j)}$, which is normally distributed on $\mathbb{R}$ with expected value $\mu_{i,j}$ and variance $1/n_{i,j}$.

Furthermore, we choose test statistics $T_{i,j}(X)=\hat{\theta}_{i,j}(X)$ and rejection regions $\Gamma_{i,j}(\alpha)=\big(\Phi_{(0,1/n_{i,j})}^{-1}(1-\alpha),\infty\big)$, where $\Phi_{(\mu,\sigma^{2})}$ is the cumulative distribution function of the normal distribution on $\mathbb{R}$ with expected value $\mu$ and variance $\sigma^{2}$.

Assumptions $(RA1)$ -- $(RA3)$ have already been discussed before, with $c_{i,j}=d_{j}=1/2$ for all $i,j$, and $(RA4)$ is clear. Under this model, due to Lemma~\ref{lm:replicabilitygeneralassumptions}, assumptions $(GA1)$ -- $(GA4)$ are fulfilled.
\item[] Model $2$: For multiple $t$-tests instead of $Z$-tests, where the observations have unknown variance (cf. Model $2$ in \cite{dickhaus2013randomized}), assumptions $(RA1)$ -- $(RA4)$ are analogous to verify, which again results in an overall model that fulfills assumptions $(GA1)$ -- $(GA4)$.
\end{description}
\end{example}

We give a sufficient condition based on Theorem~\ref{thm:3} for the validity of the randomized $p$-values $p_{j}^{rand}$, that result from our model setup.
\begin{theorem}
\label{thm:4}
Let a model as above be given, such that assumptions $(RA1)$ -- $(RA4)$ are fulfilled, and let $j\in\{1,\ldots,m\}$ be fixed.

If, for all $i=1,\ldots,s$ and $\vartheta,\vartheta_{0}\in\Theta$ with $\theta_{j}(\vartheta),\theta_{j}(\vartheta_{0})\in H_{j}$ and $\theta_{i,j}(\vartheta)\leq 0=\theta_{i,j}(\vartheta_{0})$, it holds $T_{i,j}(X)^{(\vartheta)}\leq_{\mathrm{hr}}T_{i,j}(X)^{(\vartheta_{0})}$, then $p_{j}^{rand}$ is a valid $p$-value.
\end{theorem}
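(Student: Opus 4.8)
The plan is to reduce Theorem~\ref{thm:4} to Theorem~\ref{thm:3}, which we may invoke because Lemma~\ref{lm:replicabilitygeneralassumptions} guarantees that the replicability model satisfies $(GA1)$ -- $(GA4)$ and the construction $\Gamma_{j}(\alpha)=(F_{\mathrm{Beta}(s-\gamma+1,1)}^{-1}(1-\alpha),1]$ with $p_{j}^{LFC}(x)=1-F_{\mathrm{Beta}(s-\gamma+1,1)}(T_{j}(x))$ is precisely of the form required there (with $F=F_{\mathrm{Beta}(s-\gamma+1,1)}$ the c.d.f.\ of $T_{j}(X)$ under an LFC $\vartheta_{0}$ for $\varphi_{j}$). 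So it suffices to show that the marginal hazard-rate orderings $T_{i,j}(X)^{(\vartheta)}\leq_{\mathrm{hr}}T_{i,j}(X)^{(\vartheta_{0})}$, assumed to hold for every study $i$ and every pair $\vartheta,\vartheta_{0}$ with $\theta_{i,j}(\vartheta)\leq 0=\theta_{i,j}(\vartheta_{0})$, imply the aggregated ordering $T_{j}(X)^{(\vartheta)}\leq_{\mathrm{hr}}T_{j}(X)^{(\vartheta_{0})}$ whenever $\theta_{j}(\vartheta)\in H_{j}$ and $\vartheta_{0}$ is an LFC for $\varphi_{j}$.

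First I would describe the LFC $\vartheta_{0}$ for $\varphi_{j}$ explicitly: by the structure of $H_{j}$ (at most $\gamma-1$ positive components, equivalently at least $s-\gamma+1$ non-positive components) and $(RA1)$, an LFC has exactly $s-\gamma+1$ of the marginal parameters $\theta_{i,j}$ equal to $0$ and the remaining $\gamma-1$ components sent to $+\infty$; under $(RA4)$ the latter $p$-values $p_{i,j}$ are then $0$ almost surely, so that $p_{(\gamma),j}$ is the minimum of the $s-\gamma+1$ genuinely $\mathrm{Uni}[0,1]$ values, hence $T_{j}(X)=1-p_{(\gamma),j}$ has c.d.f.\ $F_{\mathrm{Beta}(s-\gamma+1,1)}$ under $\vartheta_{0}$ — consistent with the model setup. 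For a general $\vartheta$ with $\theta_{j}(\vartheta)\in H_{j}$, I would use the standard fact that $T_{i,j}(X)^{(\vartheta)}\leq_{\mathrm{hr}}T_{i,j}(X)^{(\vartheta_{0})}$ is equivalent, via $p_{i,j}=1-F_{i,j}(T_{i,j})$ and $(RA1)$, to a hazard-rate (equivalently, reversed-hazard-rate) statement comparing the $p$-value distributions, and in particular that each $p_{i,j}(X)^{(\vartheta)}$ is stochastically no smaller than $\mathrm{Uni}[0,1]$ in a sufficiently strong sense. The key structural step is then: the map $(t_{1},\ldots,t_{s})\mapsto 1-p_{(\gamma),j}$ is coordinatewise monotone, and the hazard-rate order is closed under taking such monotone combinations of independent (across studies) components, with the caveat that the non-positive-parameter coordinates dominate the behaviour near the upper tail of $T_{j}$. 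I would make this precise by writing the survival function $\overline{F}_{T_{j}}^{(\vartheta)}(t)=\text{pr}_{\vartheta}(p_{(\gamma),j}<1-t)$ as a sum over which $\gamma$ studies realise the smallest $p$-values, factor using independence across studies (guaranteed in the replicability setup since the studies are independent), and compare the ratio $\overline{F}_{T_{j}}^{(\vartheta)}(t)/\overline{F}_{T_{j}}^{(\vartheta_{0})}(t)$ term by term, showing it is non-decreasing in $t$ on $(F_{\mathrm{Beta}(s-\gamma+1,1)}^{-1}(0),1)$.

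The main obstacle is this last monotonicity-of-the-ratio computation: $p_{(\gamma),j}$ is an order statistic of non-identically-distributed independent $p$-values, so its survival function is a sum of products, and one must argue that the hazard-rate dominance of each factor survives the summation. The clean way around grinding through it is to invoke a closure property of $\leq_{\mathrm{hr}}$ under order statistics of independent (non-i.i.d.) samples — I would cite the relevant result in \cite{shaked2007stochastic} (the chapter on order statistics and the hazard-rate order) — after first reducing to the case where $\vartheta$ also has exactly $s-\gamma+1$ zero coordinates and $\gamma-1$ coordinates at $+\infty$. That reduction is legitimate because enlarging $\theta_{i,j}(\vartheta)$ in a $K$-coordinate only makes the corresponding $p_{i,j}$ smaller in the hazard-rate order (by the assumed marginal ordering applied in the other direction, or directly by $(RA4)$), which can only decrease $p_{(\gamma),j}$, i.e.\ increase $T_{j}$, in the hazard-rate order; and within the non-positive coordinates, replacing $\theta_{i,j}(\vartheta)<0$ by $0$ moves $T_{i,j}^{(\vartheta)}$ up to $T_{i,j}^{(\vartheta_{0})}$ in $\leq_{\mathrm{hr}}$. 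Hence the worst case is when every study with a true marginal null sits exactly at its LFC, which is precisely the global LFC $\vartheta_{0}$, giving $T_{j}(X)^{(\vartheta)}\leq_{\mathrm{hr}}T_{j}(X)^{(\vartheta_{0})}$; Theorem~\ref{thm:3} then yields validity of $p_{j}^{rand}$, completing the proof.
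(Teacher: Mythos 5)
Your overall architecture coincides with the paper's: describe the LFC as a permutation of $(\infty,\ldots,\infty,0,\ldots,0)$, pass to $P_{i}=1-p_{i,j}=F_{i,j}(T_{i,j}(X))$ (which preserves $\leq_{\mathrm{hr}}$ because $F_{i,j}$ is increasing, Part $3$ of Theorem~\ref{thm:hazardorder}), observe that under the LFC $T_{j}(X)=P_{(s-\gamma+1:s)}$ is distributed as the maximum of $s-\gamma+1$ i.i.d.\ uniforms, and reduce validity of $p_{j}^{rand}$ to $T_{j}(X)^{(\vartheta)}\leq_{\mathrm{hr}}T_{j}(X)^{(\vartheta_{0})}$ via Theorem~\ref{thm:3}. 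All of that is correct and is what the paper does.

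The gap sits exactly where you flag "the main obstacle", and neither of your two escape routes is sound as stated. First, the coordinatewise monotonicity you invoke --- that replacing one independent component by something larger in $\leq_{\mathrm{hr}}$ can only increase an order statistic in $\leq_{\mathrm{hr}}$ --- is not a theorem: the hazard rate order is not closed under maxima (or general order statistics) of independent components under componentwise $\leq_{\mathrm{hr}}$ dominance. This is precisely why Theorem $1.B.35$ of \cite{shaked2007stochastic} requires the all-pairs condition $X_{i}\leq_{\mathrm{hr}}Y_{j}$ for \emph{every} $i,j$ together with a common support, and hence only licenses replacing all components simultaneously. Second, the ``term by term'' comparison of the survival-function ratio also fails in principle: a ratio of sums need not be monotone even when every term-by-term ratio is (the same reason monotone likelihood ratios are not preserved under mixtures). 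The paper closes this gap with Lemma~\ref{lm:hazard}, which chains two different facts: Theorem $1.B.28$ ($X_{(k:m)}\leq_{\mathrm{hr}}X_{(k:m-1)}$, i.e.\ \emph{dropping} an observation increases an order statistic in $\leq_{\mathrm{hr}}$) to discard the $\gamma-1$ coordinates sent to $+\infty$ and reduce $P_{(s-\gamma+1:s)}$ to the maximum of the $s-\gamma+1$ ``null'' coordinates, and only then Theorem $1.B.35$, which applies because those coordinates are all compared against identically distributed uniforms (so the all-pairs condition holds automatically) on the common support $(0,1)$. Your reduction identifies the right worst case, but the justification for discarding the alternative coordinates must go through $1.B.28$ rather than through a coordinatewise hr-monotonicity of order statistics.
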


\begin{remark}
\begin{description}
\item[] Theorem$~\ref{thm:4}$ still holds if we replace the hazard rate order $\leq_{\mathrm{hr}}$ by the likelihood ratio order $\leq_{\mathrm{lr}}$.
\item[] Under a model that fulfills Theorem~\ref{thm:4} the randomized $p$-values $p_{i,j}^{rand}\ (i=1,\ldots,s)$, resulting from study $i$ and marker $j$ are valid as well, as a result of Theorem~\ref{thm:3}. 
\end{description}
\end{remark}

\begin{example}
\label{ex:thm4}\ \\
The randomized $p$-values $p_{j}^{rand}\ (j=1,\ldots,m)$, in Models $1$ and $2$, as introduced in 
Example~\ref{ex:replicabilitymodels} are valid. Here, we show that for Model$~1$.

Recall that $T_{i,j}(X)=\hat{\theta}_{i,j}(X)$ is normally distributed on $\mathbb{R}$ with expected value $\theta_{i,j}(\vartheta)$ and variance $1/n_{i,j}$ under $\vartheta\in\Theta$, where $n_{i,j}\ (i=1,\ldots,s;\ j=1,\ldots,m)$, are the fixed sample sizes. For $i\in\{1,\ldots,s\}$ and $\vartheta,\vartheta_{0}\in\Theta$, such that $\theta_{j}(\vartheta),\theta_{j}(\vartheta_{0})\in H_{j}$ and $\theta_{i,j}(\vartheta)\leq 0$, $\theta_{i,j}(\vartheta_{0})=0$, it holds $T_{i,j}(X)^{(\vartheta)}\leq_{\mathrm{lr}}T_{i,j}(X)^{(\vartheta_{0})}$, cf. Example$~\ref{ex:randomizedpvalues}$. It follows from Theorem$~\ref{thm:4}$, that $p_{j}^{rand}$ is a valid $p$-value $(j=1,\ldots,m)$.

In Fig.$~\ref{fig:cdfs}$ we compare the cumulative distribution functions of $p_{j}^{LFC}$ and $p_{j}^{rand}$ for $\theta_{j}(\vartheta)\in H_{j}$, $\theta_{j}(\vartheta)=(-1.5\,n_{1,j}^{-1/2},\ldots,-1.5\,n_{s-\gamma+1,j}^{-1/2},1,\ldots,1)$, and $\theta_{j}(\vartheta)\in K_{j}$, $\theta_{j}(\vartheta)=(2\,n_{1,j}^{-1/2},\ldots,2\,n_{s,j}^{-1/2})$, in the first and second graph, respectively, where we set $s=10,\gamma=6$, and the sample sizes to $n_{1,j}=\cdots=n_{s,j}=50$.

The left graph shows that the randomized $p$-value $p_{j}^{rand}(X,U_{j})$ is stochastically not larger than the LFC-based $p$-value $p_{j}^{LFC}(X)$ but remains valid, i.e. not smaller than a uniform distribution on $[0,1]$. It is apparent that $p_{j}^{rand}(X,U_{j})$ comes much closer to the uniform distribution on $[0,1]$. The right graph, however, illustrates that the randomized $p$-value $p_{j}^{rand}(X,U_{j})$ is stochastically larger than the LFC-based $p$-value $p_{j}^{LFC}(X)$, under a parameter $\vartheta\in\Theta$ with $\theta_{j}(\vartheta)\in K_{j}$.
\begin{figure}
\includegraphics[width=0.9 \textwidth]{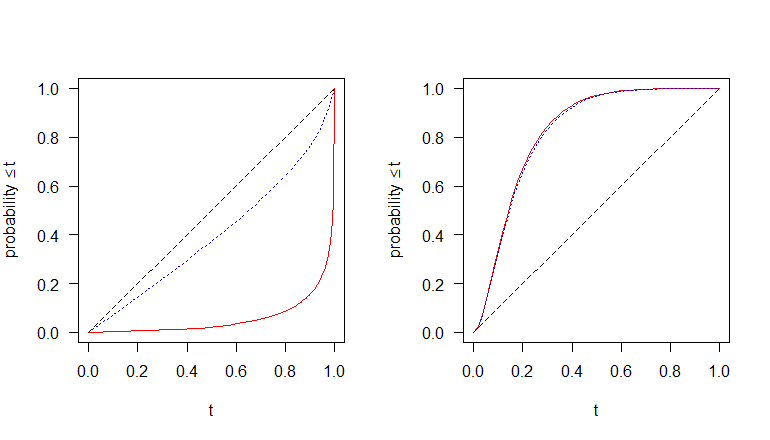}
\caption{A comparison of the cumulative distribution functions of $p_{j}^{LFC}(X)$ (solid) and $p_{j}^{rand}(X,U)$ (dotted) under Model $1$ for $s=10,\gamma=6$, and $n_{1,j}=\cdots=n_{s,j}=50$. The true parameters are $\theta_{j}(\vartheta)=(-1.5\,n_{1,j}^{-1/2},\ldots,-1.5\,n_{s-\gamma+1,j}^{-1/2},1,\ldots,1)$ on the left and $\theta_{j}(\vartheta)=(2\,n_{1,j}^{-1/2},\ldots,2\,n_{s,j}^{-1/2})$ on the right side. For comparison, the dashed lines depict the cumulative distribution function of the standard uniform distribution, which is given by $t \mapsto t$ for $t \in [0, 1]$.}
\label{fig:cdfs}
\end{figure}
\end{example}

\section{Estimation of the proportion of true null hypotheses} \label{sec:estimationtruenullhypotheses}
\subsection{Motivation}
\label{sec:pi0motivation}
In this section we demonstrate how randomized $p$-values generally lead to a more precise estimation of the proportion $\pi_{0}$ of true null hypotheses. This is useful for data-adaptive multiple test procedures, but knowing $m_{0}=m \cdot \pi_{0}$ can also be valuable in itself. In bio-marker identification, for instance, the size of $m_1 = m - m_{0}$ can be an indicator for the complexity of the examined disease.

The Schweder-Spj{\o}tvoll estimator is given by $\hat{\pi}_{0} \equiv \hat{\pi}_{0}(\lambda)=\{1-\hat{F}_{m}(\lambda)\}/(1-\lambda)$, where $\hat{F}_{m}$ denotes the empirical cumulative distribution function of the $m$ marginal $p$-values, and $\lambda\in[0,1)$ is a tuning parameter \citep{schweder1982plots}. The estimator $\hat{\pi}_{0}(\lambda)$ represents the proportion of $p$-values above $\lambda$ divided by the expected proportion of the latter given uniformly distributed $p$-values. Assuming that the $p$-values corresponding to the false null hypotheses are always below $\lambda$, and the ones corresponding to the true null hypotheses are uniformly distributed on $[0,1]$, the term $1-\hat{F}_{m}(\lambda)$ is then, in expectation equal to $(1-\lambda)\pi_{0}$, leading to an unbiased estimator $\hat{\pi}_{0}(\lambda)$ for $\pi_{0}$. Graphically, the estimator $\hat{\pi}_{0}(\lambda)$ equals one minus the offset at $t=0$ of the straight line connecting $(\lambda,\hat{F}_{m}(\lambda))$ with $(1,1)$. We sometimes write $\hat{\pi}_{0}^{LFC}$ and $\hat{\pi}_{0}^{rand}$ to emphasize the usage of the LFC-based or the randomized $p$-values in the estimator $\hat{\pi}_{0}$, respectively.

\subsection{Simulations}
\label{sec:pi0simulation}

First, we simulated one realization of the empirical cumulative distribution functions of $(p_{j}^{LFC})_{j=1,\ldots,m}$ and $(p_{j}^{rand})_{j=1,\ldots,m}$, computed on the same data, where we chose $m=500$, $s=10$, $\gamma=6$, and $\pi_{0}=0.7$. Hence, we consider $10$ studies, each examining the same $500$ endpoints, where $m_{1}=150$ of these have a positive effect in at least $\gamma=6$ and the other $m_{0}=350$ have a positive effect in less than $6$ studies. We call these true and false endpoints, respectively, according to whether their respective null hypotheses are true or false. For each true and false endpoint we drew the number of studies with positive effects binomially from $\{0,\ldots,\gamma-1\}$ and $\{\gamma,\ldots,s\}$ with the success probabilities $p_{0}=0.8$ and $p_{1}=0.8$, respectively. For each study $i$ and endpoint $j$ we set the sample size to $n_{i,j}=50$ and drew for non-positive effects $\theta_{i,j}(\vartheta)$ uniformly from $(\mu_{\mathrm{min}}~50^{-1/2},0]$ and for positive effects uniformly from $(0,\mu_{\mathrm{max}}]$, where we chose $\mu_{\mathrm{min}}=-2.5$ and $\mu_{\mathrm{max}}=1.5$.

\begin{figure}
\includegraphics[width=250pt]{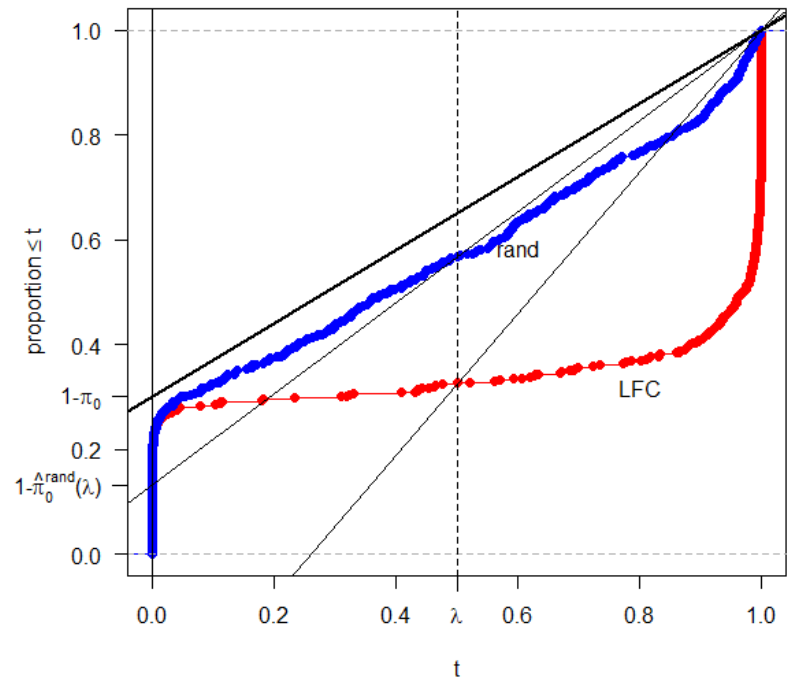}
\caption{One realization of the empirical cumulative distribution functions of the marginal $p$-values $(p_{j}^{LFC})_{j}$ and $(p_{j}^{rand})_{j}$, respectively, under Model $1$ for $m=500,s=10,\gamma=6$, and $\pi_{0}=0.7$. The thick, straight line connects the points $(1,1)$ and $(0,1-\pi_{0})$. The two thinner, straight lines connect the points $(1,1)$ and $(\lambda,\hat{F}_{m}(\lambda))$ and intersect the vertical axis at $(0,1-\hat{\pi}_{0}^{rand}(\lambda))$ or $(0,1-\hat{\pi}_{0}^{LFC}(\lambda))$ for the respective $p$-values.}
\label{fig:ecdfs}
\end{figure}

Figure~\ref{fig:ecdfs} displays one realization of the empirical cumulative distribution functions of the marginal, LFC-based and the marginal, randomized $p$-values, respectively. The estimation $\hat{\pi}_{0}(\lambda)$ is more accurate if the empirical cumulative distribution function of the utilized marginal $p$-values at point $t=\lambda$ is closer to the thick line connecting $(0,1-\pi_{0})$ with $(1,1)$. Clearly, $\hat{\pi}_{0}^{rand}(\lambda)$ is more accurate than $\hat{\pi}_{0}^{LFC}(\lambda)$ for $0.1<\lambda<1$. Also, $\hat{\pi}_{0}^{rand}(\lambda)$ is more stable with respect to $\lambda$, as the lower curvature of the respective empirical cumulative distribution function suggests.

Next, we calculated the expected values of $\hat{\pi}_{0}^{LFC}(\lambda)$ and $\hat{\pi}_{0}^{rand}(\lambda)$ for different values of $\pi_{0},(\mu_{\mathrm{min}},\mu_{\mathrm{max}}),$ and $\gamma$, where we set $s=10,m=100$, and $\lambda=1/2$. Apart from that, we drew everything else as before.

\begin{figure}
\includegraphics[width=300pt]{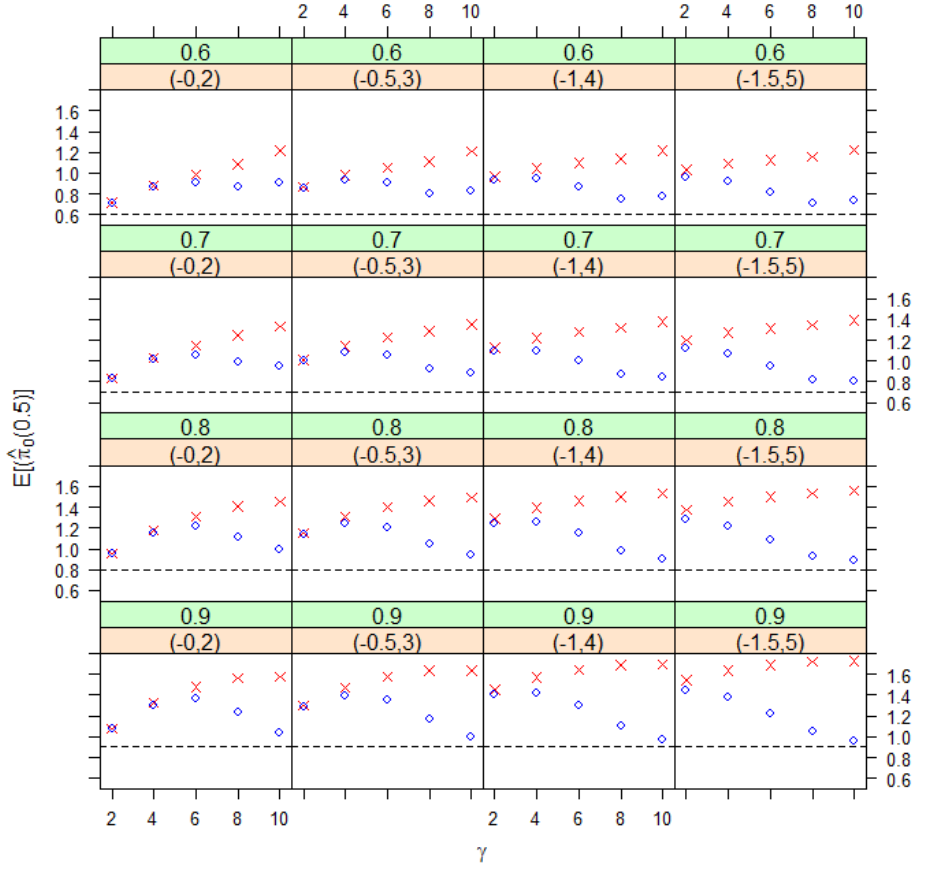}
\caption{A comparison of the expected values of $\hat{\pi}_{0}(1/2)$ utilizing either $(p_{j}^{LFC})_{j}$ (circles) or $(p_{j}^{rand})_{j}$ (crosses) in all considered settings. In each graph the horizontal axis displays the parameter $\gamma$. The graphs differ in their choice of $(\mu_{\mathrm{min}},\mu_{\mathrm{max}})$ (columns) and $\pi_{0}$ (rows). Dashed lines represent the true values of the proportion $\pi_{0}$ of true null hypotheses.}
\label{fig:epi0u}
\end{figure}

We looked at each combination of $\pi_{0}\in\{0.6,0.7,0.8,0.9\},\;(\mu_{\mathrm{min}},\mu_{\mathrm{max}})\in\{(0,2),(-0.5,3),(-1,4),(-1.5,5)\},\;\text{and}\;\gamma\in\{2,4,6,10\}$. Each pair $(\mu_{\mathrm{min}},\mu_{\mathrm{max}})$ was chosen such that $|\mu_{\mathrm{min}}|$ and $\mu_{\mathrm{max}}$ increase simultaneously, and thus, model uncertainty increases in both directions. Figure~\ref{fig:epi0u} illustrates the effect 
of $\gamma$ on the expected value of $\hat{\pi}_{0}(1/2)$ in each setting when utilizing LFC-based $p$-values (crosses) or randomized $p$-values (circles), respectively. For the exact numbers we refer to Table~\ref{tab:epi0lfc} and Table~\ref{tab:epi0rand}, respectively. All values have been double-checked by Monte Carlo simulations.

According to Lemma $1$ in \cite{RandoFisher}, the Schweder-Spjøtvoll estimator $\hat{\pi}_{0}(\lambda)$ applied to either of the $p$-values has a non-negative bias. In each setting we observe lower expected values and therefore lower bias for $\hat{\pi}_{0}^{rand}(1/2)$ than for $\hat{\pi}_{0}^{LFC}(1/2)$. The difference between the expectations tend to be more emphasized for higher $\gamma$ and higher model uncertainty, i.e. for larger $\mu_{\mathrm{max}}$ and $|\mu_{\mathrm{min}}|$.

As mentioned before, we expect a more stable estimation $\hat{\pi}_{0}(\lambda)$ of $\pi_{0}$ with respect to $\lambda$ when utilizing the randomized $p$-values. For the parameter settings $\pi_{0}=0.6,\gamma=8,\mu_{\mathrm{min}}=-2$, and $\mu_{\mathrm{max}}=4$, Fig.$~\ref{fig:lambda}$ compares the expected values of $\hat{\pi}_{0}(\lambda)$ for $\lambda=0.1,0.2,\ldots,0.9$ and either $p$-values. We checked many other configurations, too. They lead to similar results, although not always so pronounced.

\begin{figure}
\includegraphics[width=175pt]{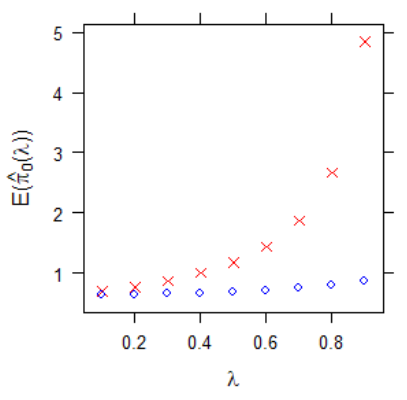}
\caption{The expected values of $\hat{\pi}_{0}(\lambda)$ for different tuning parameters $\lambda$ under Model $1$ for $m=100,s=10,n_{i,j}=50,\gamma=8,\pi_{0}=0.6,\mu_{\mathrm{min}}=-2,\mu_{\mathrm{max}}=4$, and $p_{0}=p_{1}=0.8$ when using either the LFC-based $p$-values (crosses) or the randomized $p$-values (circles).}
\label{fig:lambda}
\end{figure}

Finally, we examined the higher variance of $\hat{\pi}_{0}^{rand}(1/2)$ when utilizing the randomized $p$-values $(p_{j}^{rand})_{j}$, due to the additional randomization by $U_{j}\ (j=1,\ldots,m)$. We calculated the standard deviation of $\hat{\pi}_{0}(1/2)$ utilizing either the LFC-based $p$-values or the randomized $p$-values, for the same settings as we did for Fig.$~\ref{fig:epi0u}$ via Monte Carlo simulations. For the results we refer to the appendix. Using $(p_{j}^{rand})_{j=1,\ldots,m}$, we observe higher standard deviations of $\hat{\pi}_{0}^{rand}(1/2)$ in each setting short of one. The largest standard deviation when using the randomized $p$-values across all considered settings was below $0.1$. For the exact values we refer to \ref{sec:stochasticordering}. We also compared the mean squared errors of $\hat{\pi}_{0}^{rand}$ and $\hat{\pi}_{0}^{LFC}$ in all considered parameter settings. In each setting the mean squared error was higher when using the LFC-based $p$-values.

\section{An application on multiple Crohn's disease genome-wide assocation studies}
\label{sec:realdata}
We looked at the data from multiple genome-wide association studies with the goal of identifying susceptibility loci for Crohn's disease \citep{franke2010genome}. The authors looked at six distinct genome-wide association studies, further dividing two of these resulting in a total of eight distinct studies, which comprised $6,333$ disease cases and $15,056$ healthy controls altogether. In their discovery panel, they combined these eight studies in a meta-analysis and looked at the most promising features in a further replication panel. For lack of data on the latter part we only looked at the data stemming from the original eight studies.

In their work, the authors applied multiple $Z$-tests for the logarithmic odds ratios in each scan and combined them to test for two-sided associations of phenotype and genotype at each of $m$ loci. For these, randomized $p$-values can also be defined \citep{dickhaus2013randomized}. However, in such a two-sided setting each parameter in the null hypotheses $H_{j}=\{(\theta_{1,j},\ldots,\theta_{s,j})\in \mathbb{R}^{s}:\theta_{k,j}=0\;\text{for at least $s-\gamma+1$ indices $k$}\}\ (j=1,\ldots,m)$, would lie next to the respective alternative $K_{j}= \mathbb{R}^{s}\setminus H_{j}$ making each one an LFC for their respective null hypothesis. In spite of the composite nature of the null hypotheses, the LFC-based $p$-values would then hold a uniform distribution under any parameter in the null hypothesis and using randomized $p$-values would be unnecessary.

Instead, we looked at the original $Z$-scores for associations in one fixed direction between the investigated single-nucleotide polymorphisms and Crohn's disease. Each of the eight studies investigated the effect of $953,241$ single-nucleotide polymorphisms on Crohn's disease. We designated one of the studies as a primary study and selected the most promising features with the Benjamini-Hochberg step-up procedure at false discovery rate \citep{benjamini1995controlling} levels $q=0.2$ or $q=0.5$. After selection we ended up with $m=630$ and $m=2,257$ single-nucleotide polymorphisms, respectively, and tested their associations' replicability based on the remaining $s=7$ studies. For both false discovery rate levels $q$, we looked at the choices $\gamma=2$ and $\gamma=4$, and calculated the LFC-based and randomized $p$-values as in the model described in Section~\ref{sec:modelreplicability}. For these values of $\gamma$, we have $c_{j}=2^{-(7-2+1)}=2^{-6}$ and $c_{j}=2^{-(7-4+1)}=2^{-4}$, respectively, where $d_{j}=1/2$ results from the model $(j=1,\ldots,m)$.

We then calculated the Schweder-Spj{\o}tvoll estimator $\hat{\pi}_{0}(\lambda)$ with $\lambda=1/2$ for the four parameter settings. Figure~\ref{fig:realdata} illustrates the empirical cumulative distribution functions of the LFC-based and the randomized $p$-values, respectively, after selection. The values for the settings $(q,\gamma)=(0.2,2),(0.2,4),(0.5,2),(0.5,4)$ are, in order,
\begin{eqnarray*}
(\hat{\pi}_{0}^{LFC}(\lambda),E(\hat{\pi}_{0}^{rand}(\lambda))) &=& (0.4603,0.4651),(0.8857,0.7572),\\
&~&(0.9880,0.9668),(1.5498,1.3013),
\end{eqnarray*} 
where $E$ refers to the randomness of $(U_j: 1 \leq j \leq m)$.
These are also displayed above their corresponding graphs. The standard deviation for the estimation using the randomized $p$-values are $\text{var}^{1/2}(\hat{\pi}_{0}^{rand}(\lambda))=0.00276$, $0.01542$, $0.00377$, $0.01109$ for the respective settings in the same order. The values corresponding to the use of the randomized $p$-values are a result of Monte Carlo simulations with $100,000$ repetitions in each setting.

Let us discuss these results. An increase in the false discovery rate level $q$ increases the proportion $\pi_{0}$ which favours the use of the randomized $p$-values. A higher $\gamma$ increases the proportion $\pi_{0}$ and reduces the constant $c_{j}=2^{-(7-\gamma+1)}\ (j=1,\ldots,m)$, both benefiting the estimator $\hat{\pi}_{0}^{rand}(\lambda)$. Choosing $q$ and $\gamma$ both too high can lead to a too large $\pi_{0}$ making it difficult to estimate the latter as the example with $q=0.5$ and $\gamma=4$ demonstrates. On the other hand, choosing both $q$ and $\gamma$ too low results in a low proportion of true null hypotheses, of which the remaining do not offer high enough deviation from the alternative to facilitate the usage of randomized $p$-values as the example with $q=0.2$ and $\gamma=2$ demonstrates.

\begin{figure}
\includegraphics[width=325pt]{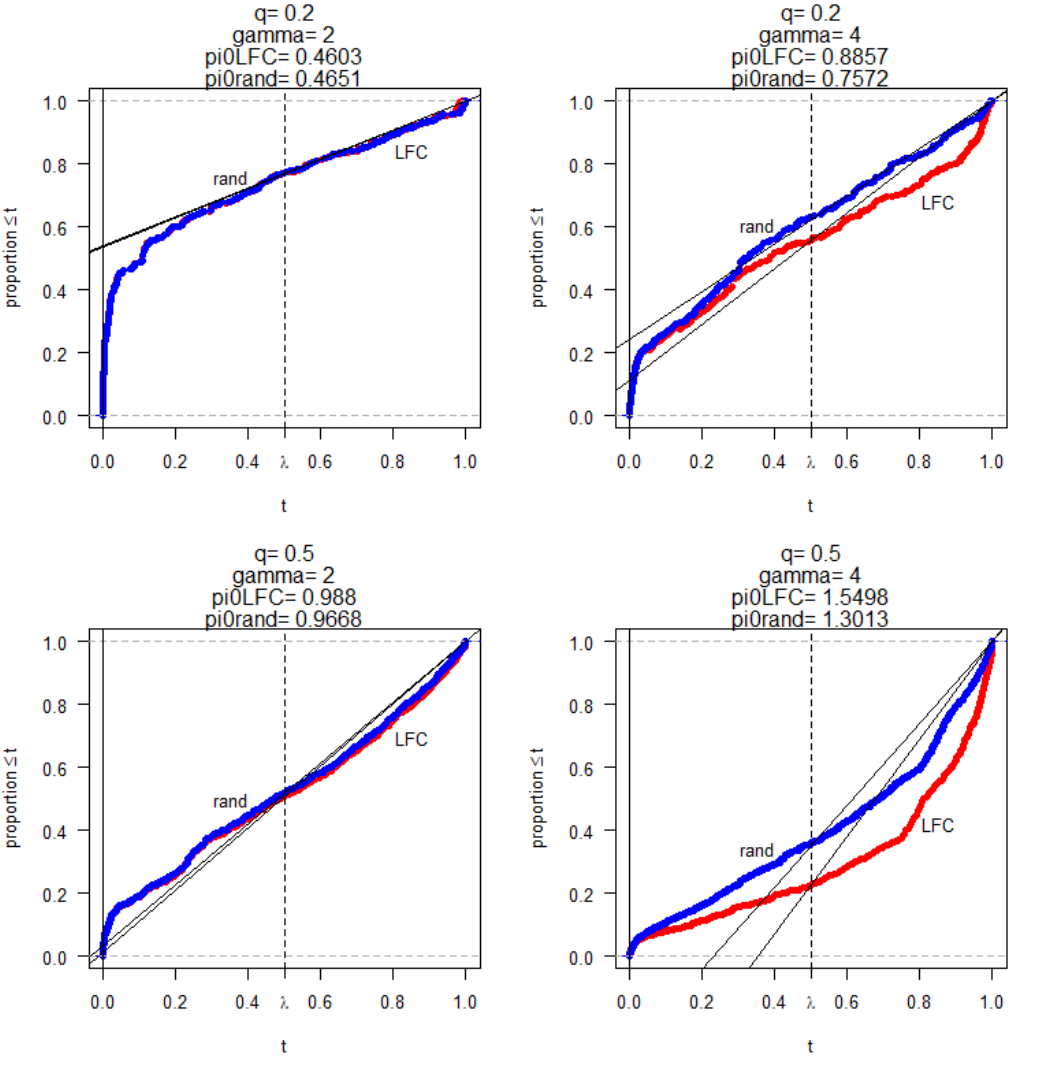}
\caption{The empirical cumulative distribution functions of the LFC-based and the randomized $p$-values, respectively, in the multiple Crohn's disease genome-wide associations studies example after selection. Selection has been conducted with the Benjamini--Hochberg step-up procedure with false discovery rates $q=0.2,0.5$, and the $p$-values are calculated according to the model as described in Section~\ref{sec:replicabilityanalysis} with $\gamma=2,4$. The straight lines connect the points $(1,1)$ and $(\lambda,\hat{F}_{m}(\lambda))$, and intersect the vertical axis in the point $(0,1-\hat{\pi}_{0}(\lambda))$, where $\lambda=1/2$. The values $\hat{\pi}_{0}^{LFC}(\lambda)$ and $E(\hat{\pi}_{0}^{rand}(\lambda))$ are displayed above their respective graphs as pi0LFC and pi0rand, respectively.}
\label{fig:realdata}
\end{figure}

\begin{table}[!htbp]
  \caption{Expected values of $\hat{\pi}_{0}^{LFC}(1/2)$ using the LFC-based $p$-values $(p_{j}^{LFC})_{j=1,\ldots,m}$ in Model $1$ with $s=10$}
  \centering
    \begin{tabular}{lrrrrr}
    $\gamma=2$   & \multicolumn{1}{l}{$\pi_{0}$} & 0.6   & 0.7   & 0.8   & 0.9 \\
    $(\mu_{\mathrm{min}},\mu_{\mathrm{max}})$ &       &       &       &       &  \\
    (0,2) &       & 0.71623177 & 0.83559977 & 0.95496776 & 1.07433576 \\
    (-0.5,3) &       & 0.86543138 & 1.00966879 & 1.1539062 & 1.29814362 \\
    (-1,4) &       & 0.96684663 & 1.12798719 & 1.28912775 & 1.4502683 \\
    (-1.5,5) &       & 1.02947056 & 1.20104866 & 1.37262676 & 1.54420486 \\
          &       &       &       &       &  \\
    $\gamma=4$   & \multicolumn{1}{l}{$\pi_{0}$} & 0.6   & 0.7   & 0.8   & 0.9 \\
    $(\mu_{\mathrm{min}},\mu_{\mathrm{max}})$ &       &       &       &       &  \\
    (0,2) &       & 0.88196066 & 1.02879483 & 1.17562899 & 1.32246315 \\
    (-0.5,3) &       & 0.98000457 & 1.14328717 & 1.30656978 & 1.46985239 \\
    (-1,4) &       & 1.04711715 & 1.22161107 & 1.39610499 & 1.57059891 \\
    (-1.5,5) &       & 1.08842236 & 1.26981052 & 1.45119869 & 1.63258685 \\
          &       &       &       &       &  \\
    $\gamma=6$   & \multicolumn{1}{l}{$\pi_{0}$} & 0.6   & 0.7   & 0.8   & 0.9 \\
    $(\mu_{\mathrm{min}},\mu_{\mathrm{max}})$ &       &       &       &       &  \\
    (0,2) &       & 0.98832493 & 1.15060573 & 1.31288654 & 1.47516734 \\
    (-0.5,3) &       & 1.05095857 & 1.22514476 & 1.39933094 & 1.57351713 \\
    (-1,4) &       & 1.09638526 & 1.27857305 & 1.46076084 & 1.64294864 \\
    (-1.5,5) &       & 1.12467501 & 1.31176513 & 1.49885525 & 1.68594537 \\
          &       &       &       &       &  \\
    $\gamma=8$   & \multicolumn{1}{l}{$\pi_{0}$} & 0.6   & 0.7   & 0.8   & 0.9 \\
    $(\mu_{\mathrm{min}},\mu_{\mathrm{max}})$ &       &       &       &       &  \\
    (0,2) &       & 1.08287299 & 1.24400739 & 1.40514179 & 1.56627619 \\
    (-0.5,3) &       & 1.10933717 & 1.28409891 & 1.45886065 & 1.63362239 \\
    (-1,4) &       & 1.13659204 & 1.31938899 & 1.50218594 & 1.68498288 \\
    (-1.5,5) &       & 1.15472527 & 1.34232416 & 1.52992304 & 1.71752192 \\
          &       &       &       &       &  \\
    $\gamma=10$  & \multicolumn{1}{l}{$\pi_{0}$} & 0.6   & 0.7   & 0.8   & 0.9 \\
    $(\mu_{\mathrm{min}},\mu_{\mathrm{max}})$ &       &       &       &       &  \\
    (0,2) &       & 1.21412161 & 1.33352516 & 1.45292871 & 1.57233226 \\
    (-0.5,3) &       & 1.20689424 & 1.35040322 & 1.4939122 & 1.63742118 \\
    (-1,4) &       & 1.21648791 & 1.37509473 & 1.53370156 & 1.69230838 \\
    (-1.5,5) &       & 1.22408055 & 1.39233485 & 1.56058915 & 1.72884345 \\
    \end{tabular}
  \label{tab:epi0lfc}
\end{table}%

\newpage

\begin{table}[!htbp]
\caption{Expected values of $\hat{\pi}_{0}^{rand}(1/2)$ using $(p_{j}^{rand})_{j=1,\ldots,m}$ in Model $1$ with $s=10$}
  \centering
    \begin{tabular}{lrrrrr}
    $\gamma=2$   & \multicolumn{1}{l}{$\pi_{0}$} & 0.6   & 0.7   & 0.8   & 0.9 \\
    $(\mu_{\mathrm{min}},\mu_{\mathrm{max}})$ &       &       &       &       &  \\
    (0,2) &       & 0.7148753 & 0.8340172 & 0.9531591 & 1.072301 \\
    (-0.5,3) &       & 0.85571927 & 0.998338 & 1.14095672 & 1.28357545 \\
    (-1,4) &       & 0.93561312 & 1.09154809 & 1.24748306 & 1.40341803 \\
    (-1.5,5) &       & 0.9608599 & 1.1210029 & 1.28114589 & 1.44128888 \\
          &       &       &       &       &  \\
    $\gamma=4$   & \multicolumn{1}{l}{$\pi_{0}$} & 0.6   & 0.7   & 0.8   & 0.9 \\
    $(\mu_{\mathrm{min}},\mu_{\mathrm{max}})$ &       &       &       &       &  \\
    (0,2) &       & 0.86597221 & 1.01013914 & 1.15430607 & 1.298473 \\
    (-0.5,3) &       & 0.93117086 & 1.08631394 & 1.24145702 & 1.3966001 \\
    (-1,4) &       & 0.94324531 & 1.10042711 & 1.25760891 & 1.41479072 \\
    (-1.5,5) &       & 0.91686267 & 1.06965757 & 1.22245248 & 1.37524738 \\
          &       &       &       &       &  \\
    $\gamma=6$   & \multicolumn{1}{l}{$\pi_{0}$} & 0.6   & 0.7   & 0.8   & 0.9 \\
    $(\mu_{\mathrm{min}},\mu_{\mathrm{max}})$ &       &       &       &       &  \\
    (0,2) &       & 0.91209707 & 1.06154477 & 1.21099247 & 1.36044016 \\
    (-0.5,3) &       & 0.90327673 & 1.05281018 & 1.20234362 & 1.35187706 \\
    (-1,4) &       & 0.8645 & 1.00802616 & 1.15155232 & 1.29507848 \\
    (-1.5,5) &       & 0.81444879 & 0.94983046 & 1.08521213 & 1.2205938 \\
          &       &       &       &       &  \\
    $\gamma=8$   & \multicolumn{1}{l}{$\pi_{0}$} & 0.6   & 0.7   & 0.8   & 0.9 \\
    $(\mu_{\mathrm{min}},\mu_{\mathrm{max}})$ &       &       &       &       &  \\
    (0,2) &       & 0.86888431 & 0.99140844 & 1.11393257 & 1.2364567 \\
    (-0.5,3) &       & 0.80273805 & 0.92509235 & 1.04744665 & 1.16980095 \\
    (-1,4) &       & 0.74938827 & 0.86699005 & 0.98459182 & 1.10219359 \\
    (-1.5,5) &       & 0.70806821 & 0.82087336 & 0.9336785 & 1.04648365 \\
          &       &       &       &       &  \\
    $\gamma=10$  & \multicolumn{1}{l}{$\pi_{0}$} & 0.6   & 0.7   & 0.8   & 0.9 \\
    $(\mu_{\mathrm{min}},\mu_{\mathrm{max}})$ &       &       &       &       &  \\
    (0,2) &       & 0.91207747 & 0.95515966 & 0.99824186 & 1.04132406 \\
    (-0.5,3) &       & 0.8274848 & 0.88521938 & 0.94295397 & 1.00068855 \\
    (-1,4) &       & 0.77492519 & 0.84086328 & 0.90680137 & 0.97273946 \\
    (-1.5,5) &       & 0.74041001 & 0.81169235 & 0.88297469 & 0.95425703 \\
    \end{tabular}%
    \label{tab:epi0rand}%
\end{table}%

\section{Discussion}
\label{sec:discussion}

In the context of simultaneous testing of composite null hypotheses, we have demonstrated that the usage of randomized $p$-values leads to a more accurate estimation of $\pi_{0}$ when compared with the usage of LFC-based $p$-values. We have explicitly demonstrated this for the Schweder-Spj{\o}tvoll estimator $\hat{\pi}_{0}$. The higher estimation variances induced by the uniform random variates used for randomization are in most cases negligible, so that the mean squared error is lower for $\hat{\pi}_{0}^{rand}$ than for $\hat{\pi}_{0}^{LFC}$.

Our theory applies to any choice of the parameter $\gamma=2,\ldots,s$. We have not further discussed the choice of $\gamma$ nor do we make recommendations in this work. Choosing $\gamma$ close to $s$ results in strong replicability statements, but potentially only few rejections. On the other hand, in the presence of a very large number of studies $s$, replicability statements may not be suitable when choosing $\gamma=2$. Thus, one could make $\gamma$ dependent on $s$, like $\gamma=\beta s$ for $\beta\in(0,1)$. Alternatively, instead of pre-defining $\gamma$, we could for each $j=1,\ldots,m$ determine the largest $\gamma=\gamma(j)$, for which we would still reject $H_{j}$. It is then possible to declare replicability for endpoint $j$ if $\gamma(j)/s>\beta$ holds, where $\beta\in(0,1)$ is pre-defined. 

Furthermore, we have not discussed the incorporation of the estimated proportion of true null hypotheses in so-called adaptive multiple tests. \cite{blanchard2009adaptive} presented a categorization of adaptive procedures that divide between plug-in, two-stage and one-stage procedures, and provided adaptive procedures that control the false discovery rate. \cite{finner2009controlling} investigated the problem of controlling the family-wise error rate when using an estimator of $\pi_{0}$ as a plug-in estimator in single-step or step-down procedures. \cite{bogomolov2018assessing} gave an adaptive procedure that incorporates estimations of the proportion of true null hypotheses among the selected features and controls the false discovery rate for replicability analysis with two studies. It remains to be investigated to what extent the usage of randomized $p$-values can improve the power of such adaptive procedures. In the case of $s = 1$, some results in this direction can be found in \cite{dickhaus2013randomized}. These results indicate, that the power gain can be substantial.

Finally, one challenging extension of our proposed methodology is to investigate randomized $p$-values for other types of summary statistics, in particular combination test statistics of Fisher- or Stouffer-Liptak-type; see, e.\ g., \cite{Zwet1967}, \cite{Kim-Stouffer} and the references therein. In \ref{sup:sec:furthersimulations} we compare their (non-randomized) use in $\hat{\pi}_{0}$ with the use of our proposed randomized $p$-values that result from our summary statistics. Under the same model and considering the same parameter settings as in Section~\ref{sec:pi0simulation} the use of the randomized $p$-values in the Schweder-Spj{\o}tvoll estimator is still more accurate in most cases. Another possibility in this direction is to consider statistics derived from Bayesian models, for instance local false discovery rates or Bayes factors, as in \cite{Yekutieli-TEST} and \cite{Dickhaus-Bayes}, respectively.

\section*{Acknowledgments}
Financial support by the German Research Foundation under grant No. DI 1723/5-1 is gratefully acknowledged.

\appendix


\section{Some concepts of stochastic ordering} \label{sec:stochasticordering}

We briefly introduce some concepts of stochastic ordering and notations. For some further results we refer to \ref{sup:sec:resultsstochasticorders}.
\begin{definition} 
\label{def:stochasticorders}\ \\
Let $X,Y$ be two random variables with cumulative distribution functions $F,G$, respectively.
\begin{description}
\item[$1.$] We say \textit{$X$ is smaller than $Y$ in the usual stochastic order} or $X$ is stochastically not larger than $Y$, denoted by $X\leq_{\mathrm{st}}Y$, if and only if it holds $F(x)\geq G(x)$ for all $x\in(-\infty,\infty)$.

Intuitively, $X$ is more likely than $Y$ to take on small values.
\item[$2.$] We say \textit{$X$ is smaller than $Y$ in the hazard rate order}, denoted by $X\leq_{\mathrm{hr}}Y$, if and only if $(1-G(t))/(1-F(t))$ does not decrease in $t<\mathrm{max}\{u(X),u(Y)\}$, where $u(X),u(Y)$ denote the right endpoints of the supports of $X,Y$, respectively. We define $a/0=\infty$, whenever $a>0$.

Equivalently, if $X$ and $Y$ admit Lebesgue-density functions $f,g$, respectively, it holds $X\leq_{\mathrm{hr}}Y$ if and only if $f(t)/(1-F(t))\geq g(t)/(1-G(t))$ for all $t\in \mathbb{R}$, i.e. $Y$ has a smaller hazard rate function.
\item[$3.$] If $X,Y$ admit Lebesgue-density functions $f,g$, respectively, we say \textit{$X$ is smaller than $Y$ in the likelihood ratio order}, denoted by $X\leq_{\mathrm{lr}}Y$, if and only if $g(t)/f(t)$ is non-decreasing in $t$ over the union of the supports of $X$ and $Y$, where $a/0=\infty$, whenever $a>0$. Equivalently, it holds $X\leq_{lr}Y$ if and only if $f(y)g(x)\leq f(x)g(y),\;\text{for all}\;x\leq y$ 
\end{description} 
\end{definition}
These three orders only depend on the distributions of $X,Y$, i.e. they only depend on $F,G,f,g$. Hence, we introduce the following notations.
\begin{definition}
\label{def:stochasticordersnotation}

Given a statistical model $\big(\Omega,\mathcal{F},(\text{pr}_{\vartheta})_{\vartheta\in\Theta}\big)$ and test statistics $T,S:\Omega\to \mathbb{R}$ with cumulative distribution functions $F_{\vartheta},G_{\vartheta}$, respectively, and Lebesgue-density functions $f_{\vartheta},g_{\vartheta}$, respectively, under $\vartheta\in\Theta$, we write $T^{(\vartheta_{1})}\leq_{\mathrm{st}}S^{(\vartheta_{2})}$, if it holds $F_{\vartheta_{1}}(x)\geq G_{\vartheta_{2}}(x)$, for all $x$ and parameters $\vartheta_{1},\vartheta_{2}\in\Theta$. Analogously, we denote $T^{(\vartheta_{1})}\leq_{\mathrm{hr}}S^{(\vartheta_{2})}$, or $T^{(\vartheta_{1})}\leq_{\mathrm{lr}}S^{(\vartheta_{2})}$, if $F_{\vartheta_{1}},G_{\vartheta_{2}},f_{\vartheta_{1}},g_{\vartheta_{2}}$ satisfy the corresponding requirements for parameters $\vartheta_{1},\vartheta_{2}\in\Theta$.
\end{definition}

\section{Some results regarding stochastic orders} \label{sup:sec:resultsstochasticorders}

We introduce some results regarding the hazard rate order. For a set of random variables $Z_{1},\ldots,Z_{n}$, $n\geq 2$, we denote the order statistics of the first $m\leq n$ $Z_{i}$'s by $Z_{(1:m)}\leq\cdots\leq Z_{(m:m)}$. For $m=n$ we usually write $Z_{(1)}\leq\cdots\leq Z_{(n)}$.
\begin{theorem} \label{thm:hazardorder}\ \\
Let $X_{1},\ldots,X_{n}$ and $Y_{1},\ldots,Y_{n}$, be two sets of independent, not necessarily identically distributed, random variables.
\begin{description}
\item[$1$.] \citep[Theorem 1.B.28]{shaked2007stochastic}\\ It holds $X_{(k:m)}\leq_{\mathrm{hr}}X_{(k:m-1)}\ (k=1,\ldots,m-1)$.
\item[$2.$] \citep[Theorem 1.B.35]{shaked2007stochastic}\\ If $X_{1},\ldots,X_{n},Y_{1},\ldots,Y_{n}$ all have the same support $(a,b)$ for some $a<b$, and $X_{i}\leq_{\mathrm{hr}}Y_{j}\ (i=1,\ldots,n;\ j=1,\ldots,n)$, then $X_{(k:n)}\leq_{\mathrm{hr}}Y_{(k:n)}\ (k=1,\ldots,n)$. 
\item[$3.$] \citep[Theorem 1.B.2]{shaked2007stochastic}\\
If $X\leq_{\mathrm{hr}}Y$ and $\psi$ is an increasing function, then $\psi(X)\leq_{\mathrm{hr}}\psi(Y)$.
\end{description}
\end{theorem}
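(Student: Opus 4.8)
All three parts are, as the inline citations flag, verbatim versions of Theorems~1.B.28, 1.B.35 and 1.B.2 of \cite{shaked2007stochastic}, so the proof is really just to quote those; what follows is how I would argue each from first principles if a self-contained treatment were wanted. I write $\bar{H}=1-H$ for survival functions and use the characterisation underlying Definition~\ref{def:stochasticorders}: $W\leq_{\mathrm{hr}}V$ iff $t\mapsto\bar{F}_{V}(t)/\bar{F}_{W}(t)$ is non-decreasing where it is defined.

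I would take Part~3 first, as the quick one. The plan is to fix the right-continuous version of the increasing map $\psi$, set $\psi^{-1}(t)=\inf\{u:\psi(u)>t\}$ (a non-decreasing generalised inverse), observe that $\{\psi(X)>t\}$ and $\{X>\psi^{-1}(t)\}$ coincide up to the usual care with atoms of $X$ at jump points of $\psi$ (vacuous in every application in this paper, where the distributions are continuous), so $\bar{F}_{\psi(X)}(t)=\bar{F}_{X}(\psi^{-1}(t))$ and likewise for $Y$; then $\bar{F}_{\psi(Y)}(t)/\bar{F}_{\psi(X)}(t)=(\bar{F}_{Y}/\bar{F}_{X})(\psi^{-1}(t))$ is a composition of two non-decreasing maps, hence non-decreasing, which is exactly $\psi(X)\leq_{\mathrm{hr}}\psi(Y)$.

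For Part~2 I would reduce to a single swap: prove first that replacing one $X_{i}$ by a hazard-rate-larger $Y_{i}$ inside an otherwise fixed family of independent variables with common support $(a,b)$ makes the $k$-th order statistic no smaller in $\leq_{\mathrm{hr}}$, and then chain $n$ such swaps using transitivity of $\leq_{\mathrm{hr}}$. The single swap I would settle by writing $\bar{F}_{X_{(k:n)}}(t)=\text{pr}\big(\sum_{i}\textbf{1}\{X_{i}\leq t\}\leq k-1\big)$ explicitly as a Poisson-binomial tail in the marginal distribution functions evaluated at $t$, and checking that the relevant ratio is non-decreasing in $t$; the common-support hypothesis is precisely what controls that ratio near the endpoints of $(a,b)$, where the survival functions degenerate. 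Part~1 is then the ``drop one variable'' variant of the same computation, comparing the order-statistic tails built from $m$ versus $m-1$ independent variables at each $t$.

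The step I expect to be the real obstacle is Parts~1 and~2, because the order-statistic map does \emph{not} in general preserve the hazard-rate order --- only $\min$ with an independent variable does so for free, since there the survival functions merely multiply and the ratio is untouched --- so both parts genuinely need the combinatorial structure of those tail probabilities (or a total-positivity / $\mathrm{TP}_{2}$ argument), with Part~2 additionally requiring the common-support assumption. This is why in practice I would simply cite \cite{shaked2007stochastic}; the three facts then form the toolbox for the hazard-rate arguments of Section~\ref{sec:conditionsvalidity}, in particular for Theorem~\ref{thm:4}, where one passes from a hazard-rate ordering of study-specific test statistics to an ordering of the $\gamma$-th order statistic behind $T_{j}$.
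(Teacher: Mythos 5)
Your proposal matches the paper exactly: the paper gives no proof of this theorem, stating only the citations to Theorems 1.B.28, 1.B.35 and 1.B.2 of \cite{shaked2007stochastic} and referring the reader to Chapters 1.B and 1.C of that monograph, which is precisely what you do. Your supplementary first-principles sketches (the generalised-inverse argument for Part 3, the swap-and-chain reduction for Part 2) are sound as outlines and correctly flag where the real work lies, but they go beyond what the paper itself supplies.
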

For proofs and further details, the reader may consult Chapter $1.B.$ and Chapter $1.C.$ in \cite{shaked2007stochastic}.

Now, let $X_{1},\ldots,X_{n}$ be independent random variables with support $(0,1)$ and $U_{1},\ldots,U_{n}$ be independent, uniformly distributed random variables on $[0,1]$.
\begin{lemma}
\label{lm:hazard}\ \\
For all fixed $n\geq 2$, $i\in\{1,\ldots,n\}$, if $X_{k}\leq_{\mathrm{hr}}U_{k}$ holds for at least $i$ indices $k\in\{1,\ldots,n\}$, then $X_{(i:n)}\leq_{\mathrm{hr}}U_{(i:i)}$.
\end{lemma}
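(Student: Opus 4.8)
The plan is to reduce the claim to the results collected in Theorem~\ref{thm:hazardorder}. Suppose $X_k \leq_{\mathrm{hr}} U_k$ holds for at least $i$ indices; relabel the variables so that $X_1 \leq_{\mathrm{hr}} U_1, \ldots, X_i \leq_{\mathrm{hr}} U_i$ (this relabelling is harmless because order statistics are symmetric in their arguments). The natural intermediate quantity is $X_{(i:i)}$, the maximum of the first $i$ of the $X_k$'s, since the $i$-th order statistic of $n \geq i$ variables is \emph{smaller} in the hazard rate order than the $i$-th order statistic (i.e.\ the maximum) of just the first $i$ of them. So the two steps I would carry out are: (i) show $X_{(i:n)} \leq_{\mathrm{hr}} X_{(i:i)}$, and (ii) show $X_{(i:i)} \leq_{\mathrm{hr}} U_{(i:i)}$; chaining these via transitivity of $\leq_{\mathrm{hr}}$ gives the conclusion.

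For step (i), I would invoke Part~$1$ of Theorem~\ref{thm:hazardorder} repeatedly: it gives $X_{(k:m)} \leq_{\mathrm{hr}} X_{(k:m-1)}$ for $k = 1, \ldots, m-1$. Taking $k = i$ and letting $m$ run down from $n$ to $i+1$ yields the chain $X_{(i:n)} \leq_{\mathrm{hr}} X_{(i:n-1)} \leq_{\mathrm{hr}} \cdots \leq_{\mathrm{hr}} X_{(i:i)}$, so transitivity of the hazard rate order closes this step. One point to be careful about: Part~$1$ as stated concerns the order statistics of a \emph{single} family $X_1,\ldots,X_n$ and its subfamilies, which is exactly our situation, so no independence or identical-distribution hypothesis beyond what is already assumed is needed.

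For step (ii), I would apply Part~$2$ of Theorem~\ref{thm:hazardorder} to the two families $X_1,\ldots,X_i$ and $U_1,\ldots,U_i$ with $k = i = n$ (in the notation of that theorem). Its hypotheses are that all the variables share a common support and that $X_a \leq_{\mathrm{hr}} U_b$ for \emph{every} pair of indices $a,b \in \{1,\ldots,i\}$. The common-support requirement is met: the $X_k$ have support $(0,1)$ by assumption and the $U_k$ are uniform on $[0,1]$, which has the same open interior $(0,1)$. The all-pairs hazard rate domination is the step that needs a small argument rather than a direct citation: we are only given the ``diagonal'' comparisons $X_k \leq_{\mathrm{hr}} U_k$, and we need the off-diagonal ones $X_a \leq_{\mathrm{hr}} U_b$ for $a \neq b$. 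I expect this to be the main obstacle, though a mild one. The resolution is to use the fact that all the $U_b$ are identically distributed (uniform on $[0,1]$), so $\leq_{\mathrm{hr}} U_b$ is the same relation as $\leq_{\mathrm{hr}} U_a$; hence $X_a \leq_{\mathrm{hr}} U_a = _{\mathrm{d}} U_b$ gives $X_a \leq_{\mathrm{hr}} U_b$ for all $a,b$, since the hazard rate order depends only on the marginal distributions (as emphasised after Definition~\ref{def:stochasticorders}). With the hypotheses of Part~$2$ verified, we get $X_{(i:i)} \leq_{\mathrm{hr}} U_{(i:i)}$, and combining with step (i) completes the proof.
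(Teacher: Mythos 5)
Your proposal is correct and follows essentially the same route as the paper's own proof: the paper also first establishes $X_{(i:i)}\leq_{\mathrm{hr}}U_{(i:i)}$ via Part~$2$ of Theorem~\ref{thm:hazardorder} (using the same observation that the hazard rate order depends only on the marginal distributions, so the diagonal comparisons imply the all-pairs ones), and then chains $X_{(i:n)}\leq_{\mathrm{hr}}X_{(i:n-1)}\leq_{\mathrm{hr}}\cdots\leq_{\mathrm{hr}}X_{(i:i)}$ via Part~$1$ after the same harmless relabelling. Your explicit check of the common-support hypothesis is a small addition the paper leaves implicit, but the argument is identical in substance.
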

\begin{proof}
At first we consider the case $i=n$, that is, we assume $X_{k}\leq_{\mathrm{hr}}U_{k}$ holds for all $k=1,\ldots,n$. Then, we have $X_{i}\leq_{\mathrm{hr}}U_{j}$ for all $i,j$, since the hazard rate order only depends on the distributions of $X_{i}$ and $U_{j}$, and therefore $X_{(n:n)}\leq_{\mathrm{hr}}U_{(n:n)}$ follows directly from Part $2$ of Theorem~\ref{thm:hazardorder}.

For $i=1,\ldots,n-1$, we obtain from Part $1$ of Theorem~\ref{thm:hazardorder}, that $X_{(i:n)}\leq_{\mathrm{hr}}X_{(i:n-1)}\leq_{\mathrm{hr}}\cdots\leq_{\mathrm{hr}}X_{(i:i)}\leq_{\mathrm{hr}}U_{(i:i)}$, where the last inequality follows from the first part if $X_{k}\leq_{\mathrm{hr}}U_{k}$ holds for $k=1,\ldots,i$. Since $X_{1},\ldots,X_{n}$ were assumed to have $i$ such $X_{k}$, and prior calculations hold for any order of $X_{1},\ldots,X_{n}$, we can assume $X_{k}\leq_{\mathrm{hr}}U_{k}\ (k=1,\ldots,i)$, as desired.
\end{proof}
This lemma can be extended to any stochastically independent and identically distributed $U_{1},\ldots,U_{k}$ with support $(0,1)$ or any support $(a,b)$ shared with $X_{1},\ldots,X_{n}$.

The following theorem is due to \cite[Theorem 1.C.2]{shaked2007stochastic} and establishes a relationship between the three stochastic orders presented in Definition~\ref{def:stochasticorders}.
\begin{theorem*}
\label{thm:orderoforders}
For two continuous random variables $X,Y$ the likelihood ratio order $X\leq_{\mathrm{lr}}Y$ implies the hazard rate order $X\leq_{\mathrm{hr}}Y$. Both imply the stochastic order $X\leq_{\mathrm{st}}Y$.
\end{theorem*}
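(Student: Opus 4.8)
The plan is to prove the two implications one after another and then chain them, working throughout with the survival functions $\bar F = 1-F$ and $\bar G = 1-G$ and, when densities are needed, with $\bar F(t) = \int_t^\infty f(u)\,\d u$ and $\bar G(t) = \int_t^\infty g(u)\,\d u$. For the first implication, $X\leq_{\mathrm{lr}}Y \Rightarrow X\leq_{\mathrm{hr}}Y$, I would verify directly that $\bar G/\bar F$ is non-decreasing on the left-ray $\{t:\bar F(t)>0\}$. Fixing $s\le t$ with $\bar F(t)>0$ (so also $\bar F(s)\ge\bar F(t)>0$ since $\bar F$ is non-increasing), it suffices to show $\bar G(s)\bar F(t)\le\bar G(t)\bar F(s)$; substituting $\bar F(s)=\bar F(t)+\int_s^t f$ and $\bar G(s)=\bar G(t)+\int_s^t g$ reduces this to $\bar F(t)\int_s^t g(u)\,\d u\le\bar G(t)\int_s^t f(u)\,\d u$. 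By Tonelli (each double integral is at most $1$) the difference of the two sides equals $\int_s^t\!\!\int_t^\infty\!\big[f(v)g(u)-g(v)f(u)\big]\,\d v\,\d u$, and on the region of integration $u\le t\le v$, so the product form of $\leq_{\mathrm{lr}}$ from Definition~\ref{def:stochasticorders} (with $x=u$, $y=v$) gives $f(v)g(u)\le f(u)g(v)=g(v)f(u)$; hence the integrand is $\le 0$ and the monotonicity follows.

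For the second implication, $X\leq_{\mathrm{hr}}Y\Rightarrow X\leq_{\mathrm{st}}Y$, I would use only the survival-function definition: $\bar G/\bar F$ is non-decreasing on $\{\bar F>0\}$, and since $\bar F(t)\to1$ and $\bar G(t)\to1$ as $t\to-\infty$ the ratio tends to $1$ at $-\infty$; by monotonicity $\bar G(t)/\bar F(t)\ge1$, i.e. $\bar G(t)\ge\bar F(t)$, whenever $\bar F(t)>0$, while for $\bar F(t)=0$ the inequality $\bar G(t)\ge 0=\bar F(t)$ is automatic. Thus $\bar G\ge\bar F$, equivalently $F\ge G$, pointwise on $\mathbb{R}$, which is precisely $X\leq_{\mathrm{st}}Y$. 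Chaining the two implications gives $X\leq_{\mathrm{lr}}Y\Rightarrow X\leq_{\mathrm{hr}}Y\Rightarrow X\leq_{\mathrm{st}}Y$, so both the likelihood ratio order and the hazard rate order imply the usual stochastic order.

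The computations are routine; the only points needing care are the degenerate cases where a density vanishes or where the supports of $X$ and $Y$ have different right endpoints, in which case the conventions $a/0=\infty$ built into Definition~\ref{def:stochasticorders} are what keep ``$\bar G/\bar F$ non-decreasing'' meaningful, and the limit at $-\infty$, which rests only on the elementary fact that a survival function tends to $1$. I do not expect a genuine obstacle: the result is classical, and once the survival-function representation is in hand the whole argument is only a few lines.
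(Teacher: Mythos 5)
Your proposal is correct, but note that the paper does not actually prove this statement: it is quoted verbatim from Theorem 1.C.2 of Shaked and Shanthikumar (2007), and the text explicitly defers all proofs to Chapters 1.B and 1.C of that reference. What you have written is the standard textbook argument, supplied in full where the paper has only a citation. Your first step --- reducing $\bar G(s)\bar F(t)\le\bar G(t)\bar F(s)$ to $\bar F(t)\int_s^t g\le\bar G(t)\int_s^t f$ and then to the sign of $\int_s^t\int_t^\infty[f(v)g(u)-g(v)f(u)]\,\d{v}\,\d{u}$ on the region $u\le t\le v$ --- is exactly the classical double-integral proof, and the second step (ratio non-decreasing and tending to $1$ at $-\infty$, hence $\ge 1$) is likewise the standard route to $\le_{\mathrm{st}}$. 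The one point you gesture at but do not fully close is that Definition~\ref{def:stochasticorders} requires $\bar G/\bar F$ to be non-decreasing for all $t<\max\{u(X),u(Y)\}$, not merely on $\{\bar F>0\}$; this is harmless, but for completeness you should observe that $X\le_{\mathrm{lr}}Y$ forces $u(X)\le u(Y)$ (otherwise $g/f$ would drop from a positive value to $0$ inside the union of the supports), so that for $t\in[u(X),u(Y))$ the ratio equals $+\infty$ by the stated convention and monotonicity past $\{\bar F>0\}$ is automatic; similarly, in the hr-implies-st step, the conclusion $\bar G/\bar F\ge1$ on $\{\bar F>0\}$ already rules out the awkward case $\bar G(t)=0<\bar F(t)$. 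With that remark added, your argument is a complete and correct elementary proof of a result the paper only cites.
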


\section{Proofs} \label{sup:sec:proofs}

\subsection*{Proof of Theorem~\ref{thm:1}}
\label{app:proofthm1}
In order to show the first assertion, we notice that, due to assumption $(GA1)$, it holds $\{x\in\Omega:\;T_{j}(x)\in\Gamma_{j}(c_{j})\}=\{x\in\Omega:\;\hat{\theta}_{j}(x)\in K_{j}\}$. This implies
\begin{equation*}
\text{pr}_{\vartheta_{0}}(\hat{\theta}_{j}(X)\in K_{j})=\text{pr}_{\vartheta_{0}}(T_{j}(X)\in\Gamma_{j}(c_{j}))=\underset{\vartheta:\theta_{j}(\vartheta)\in H_{j}}{\mathrm{sup}}\text{pr}_{\vartheta}(T_{j}(X)\in\Gamma_{j}(c_{j}))=c_{j}.
\end{equation*}

Regarding the second assertion, we obtain that
\begin{equation}
G_{j}(t)=\text{pr}_{\vartheta_{0}}(p_{j}^{LFC}(X)\leq t\mid\hat{\theta}_{j}(X)\in K_{j})=\frac{\text{pr}_{\vartheta_{0}}(p_{j}^{LFC}(X)\leq t,\;\hat{\theta}_{j}(X)\in K_{j})}{\text{pr}_{\vartheta_{0}}(\hat{\theta}_{j}(X)\in K_{j})}. \label{eq:Gj}
\end{equation}
From $(GA1)$ it is $\{\hat{\theta}_{j}(X)\in K_{j}\}=\{T_{j}(X)\in\Gamma_{j}(c_{j})\}$. With that in mind, it is easy to see that it holds
\begin{align}
\left\{
\begin{array}{l l}
\hat{\theta}_{j}(x)\in K_{j}\Longrightarrow p_{j}^{LFC}(x)\leq c_{j}\Longrightarrow p_{j}^{LFC}(x)\leq t,\quad &t\geq c_{j},\\
p_{j}^{LFC}(x)\leq t\Longrightarrow p_{j}^{LFC}(x)< c_{j}\Longrightarrow\hat{\theta}_{j}(x)\in K_{j},\quad &t<c_{j},
\end{array}
\right. \label{eq:thm1eq1}
\end{align}
for all $x\in\Omega$. Consequently, the numerator on the right hand side in \eqref{eq:Gj} is either $\text{pr}_{\vartheta_{0}}(p_{j}^{LFC}(X)\leq t)=t$ or $\text{pr}_{\vartheta_{0}}(\hat{\theta}_{j}\in K_{j})$ for $t<c_{j}$ and $t\geq c_{j}$, respectively. This  leads to
\begin{align*}
G_{j}(t)&=\frac{t}{\text{pr}_{\vartheta_{0}}(\hat{\theta}_{j}\in K_{j})}\textbf{1}_{[0,c_{j})}(t)+\textbf{1}_{[c_{j},1]}(t)\\
&=\frac{t}{\text{pr}_{\vartheta_{0}}(\hat{\theta}_{j}\in K_{j})}\textbf{1}_{[0,\text{pr}_{\vartheta_{0}}(\hat{\theta}_{j}\in K_{j}))}(t)+\textbf{1}_{[\text{pr}_{\vartheta_{0}}(\hat{\theta}_{j}\in K_{j}),1]}(t).
\end{align*}

Finally, we show the third assertion.  Using Part $2$, we only have to show, that $\hat{\theta}_{j}(x)\in K_{j}$ implies $p_{j}^{LFC}(x)\leq c_{j}$ for all $x\in\Omega$, which is already part of \eqref{eq:thm1eq1}.

\subsection*{Proof of Theorem~\ref{thm:2}}
\label{app:proofthm2}
We recall from Theorem~\ref{thm:1} that
\begin{equation*}
p_{j}^{rand}(X,U_{j})=U_{j}~\textbf{1}_{H_{j}}\{\hat{\theta}_{j}(X)\}+\frac{p_{j}^{LFC}(X)}{\text{pr}_{\vartheta_{0}}(\hat{\theta}_{j}\in K_{j})}~\textbf{1}_{K_{j}}\{\hat{\theta}_{j}(X)\},
\end{equation*}
which implies 
\begin{align}
\label{eq:thm2introduction}
\text{pr}_{\vartheta}\big(p_{j}^{rand}(X,U_{j})\leq t\big)&=t~\text{pr}_{\vartheta}\big(\hat{\theta}_{j}(X)\in H_{j}\big)\\
&+\text{pr}_{\vartheta}\bigg[\frac{p_{j}^{LFC}(X)}{c_{j}}~\textbf{1}_{K_{j}}\{\hat{\theta}_{j}(X)\}\leq t\bigg],\quad t\in[0,1].\nonumber
\end{align}
Now, $\text{pr}_{\vartheta}\big(p_{j}^{rand}(X,U_{j})\leq t\big)\leq t$ from \eqref{eq:thm2introduction} holds, if and only if for the second summand in $(\ref{eq:thm2introduction})$ 
\begin{equation}
\label{eq:thm2equialenttovalidity}
\text{pr}_{\vartheta}\bigg[\frac{p_{j}^{LFC}(X)}{c_{j}}\textbf{1}_{K_{j}}\{\hat{\theta}_{j}(X)\}\leq t\bigg]\leq t~\text{pr}_{\vartheta}\big(\hat{\theta}_{j}(X)\in K_{j}\big)
\end{equation}
is fulfilled. Note, that due to assumption $(GA1)$ the term $\textbf{1}_{K_{j}}\{\hat{\theta}_{j}(X)\}$ in \eqref{eq:thm2equialenttovalidity} can be omitted.

The statement in Theorem~\ref{thm:2} was that
\begin{equation*}
\text{pr}_{\vartheta}\big(T_{j}(X)\in\Gamma_{j}(z)\big)\leq z\,\frac{\text{pr}_{\vartheta}(\hat{\theta}_{j}\in K_{j})}{\text{pr}_{\vartheta_{0}}(\hat{\theta}_{j}\in K_{j})},\quad 0\leq z\leq \text{pr}_{\vartheta_{0}}(\hat{\theta}_{j}\in K_{j}),
\end{equation*}
is equivalent to the validity of $p_{j}^{rand}$.

This follows from \eqref{eq:thm2equialenttovalidity} when substituting $z=t\,c_{j}=t\,\text{pr}_{\vartheta_{0}}(\hat{\theta}_{j}\in K_{j})$ and by seeing that $\text{pr}_{\vartheta}\big(p_{j}^{LFC}(X)\leq t\big)=\text{pr}_{\vartheta}\big(T_{j}(X)\in\Gamma_{j}(t)\big),\;t\in[0,1]$, holds. 
\subsection*{Proof of Theorem~\ref{thm:3}}
\label{app:proofthm3}
At first we show that 
\begin{equation}
\label{eq:thm3eq3}
\text{pr}_{\vartheta}\big(T_{j}(X)>z\big)\leq \text{pr}_{\vartheta_{0}}\big(T_{j}(X)>z\big)~\frac{\text{pr}_{\vartheta}(\hat{\theta}_{j}\in K_{j})}{\text{pr}_{\vartheta_{0}}(\hat{\theta}_{j}\in K_{j})},\;z\in[F^{-1}(1-c_{j}),\infty],
\end{equation}
holding for any $\vartheta\in\Theta$ with $\theta_{j}(\vartheta)\in H_{j}$ is equivalent to the validity of $p_{j}^{rand}$.

We make use of the following auxiliary result.
\begin{lemma}
\label{lm:thm2lemma1}
Let $h_{\vartheta}:[0,1]\to[0,1]$ be defined as follows
\begin{equation*}
h_{\vartheta}(z)=\text{pr}_{\vartheta}\Big(T_{j}(X)\in\Gamma_{j}\big(z\ \text{pr}_{\vartheta_{0}}(\hat{\theta}_{j}\in K_{j})\big)\Big)-z\ \text{pr}_{\vartheta}(\hat{\theta}_{j}\in K_{j}).
\end{equation*} 
Then, for all $\vartheta\in\Theta$ with $\theta_{j}(\vartheta)\in H_{j}$, it holds $h_{\vartheta}(0)=h_{\vartheta}(1)=0$.
\end{lemma}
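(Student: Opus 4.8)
The plan is to evaluate $h_{\vartheta}$ at the two endpoints $z=1$ and $z=0$ separately, invoking assumption $(GA1)$ for the former and the nestedness $(GA2)$ together with $(GA3)$ for the latter. First I would rewrite $h_{\vartheta}$ using Part~1 of Theorem~\ref{thm:1}, which gives $\text{pr}_{\vartheta_{0}}(\hat{\theta}_{j}\in K_{j})=c_{j}$, so that $h_{\vartheta}(z)=\text{pr}_{\vartheta}\big(T_{j}(X)\in\Gamma_{j}(z\,c_{j})\big)-z\,\text{pr}_{\vartheta}(\hat{\theta}_{j}\in K_{j})$ for $z\in[0,1]$. Fix an arbitrary $\vartheta\in\Theta$ with $\theta_{j}(\vartheta)\in H_{j}$.

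For $z=1$ the argument of $\Gamma_{j}$ equals $c_{j}$, and assumption $(GA1)$ is precisely the set identity $\{x\in\Omega:T_{j}(x)\in\Gamma_{j}(c_{j})\}=\{x\in\Omega:\hat{\theta}_{j}(x)\in K_{j}\}$. Since these two events coincide as subsets of $\Omega$, they carry the same probability under $\text{pr}_{\vartheta}$, and hence $h_{\vartheta}(1)=\text{pr}_{\vartheta}(T_{j}\in\Gamma_{j}(c_{j}))-\text{pr}_{\vartheta}(\hat{\theta}_{j}\in K_{j})=0$.

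For $z=0$ the second term vanishes and $h_{\vartheta}(0)=\text{pr}_{\vartheta}(T_{j}(X)\in\Gamma_{j}(0))$. By the nestedness assumption $(GA2)$ we have $\Gamma_{j}(0)\subseteq\Gamma_{j}(\alpha)$ for every $\alpha\in(0,1)$, while $(GA3)$ together with $\theta_{j}(\vartheta)\in H_{j}$ yields $\text{pr}_{\vartheta}(T_{j}\in\Gamma_{j}(\alpha))\le\alpha$. Consequently $0\le h_{\vartheta}(0)=\text{pr}_{\vartheta}(T_{j}\in\Gamma_{j}(0))\le\inf_{\alpha\in(0,1)}\alpha=0$, so $h_{\vartheta}(0)=0$, which completes the proof.

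I do not expect a genuine obstacle here: both evaluations are immediate once Theorem~\ref{thm:1}(1) is in hand, one being the defining set identity of $(GA1)$ and the other a monotone limit of the validity bound $(GA3)$. The only point meriting a line of care is the meaning of $\Gamma_{j}(0)$; in the context where this lemma is used, namely the proof of Theorem~\ref{thm:3} with $\Gamma_{j}(\alpha)=(F^{-1}(1-\alpha),\infty)$ for $\alpha\in[0,1]$, it is literally defined, and in any event one may read $h_{\vartheta}(0)$ as $\lim_{z\downarrow 0}h_{\vartheta}(z)$, for which the same argument via $(GA2)$ and $(GA3)$ applies verbatim.
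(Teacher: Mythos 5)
Your proposal is correct and follows essentially the same route as the paper: evaluate at $z=1$ via Theorem~\ref{thm:1}(1) and the set identity of $(GA1)$, and at $z=0$ by showing $\text{pr}_{\vartheta}(T_{j}(X)\in\Gamma_{j}(0))=0$. The only difference is that you justify the latter explicitly through $(GA2)$ and $(GA3)$, whereas the paper simply asserts it; your added care about the meaning of $\Gamma_{j}(0)$ is a minor refinement, not a different argument.
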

\begin{proof}
We see that $h_{\vartheta}(0)=\text{pr}_{\vartheta}\big(T_{j}(X)\in\Gamma_{j}(0)\big)=0$. Due to assumption $(GA1)$ and Theorem~\ref{thm:1} it holds
\begin{equation*}
\big\{x\in\Omega:T_{j}(x)\in\Gamma_{j}\big(\underbrace{\text{pr}_{\vartheta_{0}}(\hat{\theta}_{j}\in K_{j})}_{c_{j}}\big)\big\}=\big\{x\in\Omega:\hat{\theta}_{j}(x)\in K_{j}\big\},
\end{equation*}
which implies $h_{\vartheta}(1)=0$.
\end{proof}
The condition $h_{\vartheta}\leq 0$ for all $\vartheta$ with $\theta_{j}(\vartheta)\in H_{j}$, is equivalent to the condition in Theorem~\ref{thm:2}, and hence equivalent to the validity of $p_{j}^{rand}$.

With our condition to the rejection regions $\Gamma_{j}$, it holds
\begin{align}
h_{\vartheta}(t)&=\text{pr}_{\vartheta}\Big[T_{j}(X)\in\Gamma_{j}\big(t~ \text{pr}_{\vartheta_{0}}(\hat{\theta}_{j}\in K_{j})\big)\Big]-t~\text{pr}_{\vartheta}(\hat{\theta}_{j}\in K_{j})\nonumber \\ 
&=\text{pr}_{\vartheta}\Big[T_{j}(X)>F^{-1}\big(1-t~\text{pr}_{\vartheta_{0}}(\hat{\theta}_{j}\in K_{j})\big)\Big]-t~\text{pr}_{\vartheta}(\hat{\theta}_{j}\in K_{j})\label{eq:thm3eq1}.
\end{align}
Substituting $z=F^{-1}\big(1-t~\text{pr}_{\vartheta_{0}}(\hat{\theta}_{i}\in K_{i})\big)$ in 
\eqref{eq:thm3eq1}, we obtain that
\begin{align}
h_{\vartheta}(t)&=\text{pr}_{\vartheta}(T_{j}(X)>z)-(1-F(z))\frac{\text{pr}_{\vartheta}(\hat{\theta}_{j}\in K_{j})}{\text{pr}_{\vartheta_{0}}(\hat{\theta}_{j}\in K_{j})}\nonumber\\
&=\text{pr}_{\vartheta}(T_{j}(X)>z)-\text{pr}_{\vartheta_{0}}(T_{j}(X)>z)\frac{\text{pr}_{\vartheta}(\hat{\theta}_{j}\in K_{j})}{\text{pr}_{\vartheta_{0}}(\hat{\theta}_{j}\in K_{j})}\label{eq:thm3eq2},
\end{align}
and thus $h_{\vartheta}(t)\leq 0$ for all $t\in[0,1]$ if and only if \eqref{eq:thm3eq3} holds. Furthermore, from assumption $(GA1)$ it holds $\big\{\hat{\theta}_{j}\in K_{j}\big\}=\big\{T_{j}(X)\in\Gamma_{j}(c_{j})\big\}=\big\{T_{j}(X)>F^{-1}(1-c_{j})=:a\big\}$, which implies, that \eqref{eq:thm3eq3} is equivalent to 
\begin{equation}
\label{eq:thm3eq4}
\frac{\text{pr}_{\vartheta}\big(T_{j}(X)>a+b\big)}{\text{pr}_{\vartheta}\big(T_{j}(X)>a\big)}\leq\frac{\text{pr}_{\vartheta_{0}}\big(T_{j}(X)>a+b\big)}{\text{pr}_{\vartheta_{0}}\big(T_{j}(X)>a\big)},\;\text{for all}\;b>0.
\end{equation}
Now $T_{j}(X)^{(\vartheta)}\leq_{\mathrm{hr}}T_{j}(X)^{(\vartheta_{0})}$ is equivalent to \eqref{eq:thm3eq4} holding for any $a$, and thus, it implies \eqref{eq:thm3eq3} and therefore the validity of $p_{j}^{rand}$.

\subsection*{Proof of Theorem~\ref{thm:stochasticorder}}
\label{app:proofthmstochasticorder}
Let a model as in Section$~\ref{sec:modelsetup}$ be given and $j\in\{1,\ldots,m\}$ be fixed. It is $p_{j}^{rand}(X,U_{j})=U_{j}\textbf{1}\{p_{j}^{LFC}>c_{j}\}+p_{j}^{LFC}c_{j}^{-1}\textbf{1}\{p_{j}^{LFC}(X)\leq c_{j}\}$ almost surely. We introduce the notation $p(X,U_{j},c)=U_{j}\textbf{1}\{p_{j}^{LFC}>c\}+p_{j}^{LFC}c^{-1}\textbf{1}\{p_{j}^{LFC}(X)\leq c\}$ for any $c\in[0,1]$. Note, that $p(X,U_{j},0)=U_{j}$ and $p(X,U_{j},1)=p_{j}^{LFC}(X)$.

For given $t\in[0,1]$ and $\vartheta\in\Theta$ we look at the function $c\mapsto h(c)=\mathbb{P}_{\vartheta}(p(X,U_{j},c)\leq t)$. We want to show that $h$ is non-decreasing if the cumulative distribution function of $p_{j}^{LFC}$ is convex and non-increasing if it is concave under $\vartheta$. It holds 
\begin{equation*}
h(c)=t\,\mathbb{P}_{\vartheta}(p_{j}^{LFC}(X)>c)+\mathbb{P}_{\vartheta}(p_{j}^{LFC}(X)\leq c\,t)
\end{equation*}
and 
\begin{equation*}
h'(c)=-t\,f_{\vartheta}(c)+t\,f_{\vartheta}(c\,t),
\end{equation*}
where $f_{\vartheta}$ is the density of $p_{j}^{LFC}(X)$ under $\vartheta$.

Now, if the cumulative distribution function of $p_{j}^{LFC}$ is convex under $\vartheta$, then $f_{\vartheta}$ is a non-decreasing function and $f_{\vartheta}(c\,t)\leq f_{\vartheta}(c)$ for all $c$, and analogously $f_{\vartheta}(c\,t)\geq f_{\vartheta}(c)$ for all $c$, if the cumulative distribution function of $p_{j}^{LFC}$ is concave under $\vartheta$.

\subsection*{Proof of Lemma~\ref{lm:replicabilitygeneralassumptions}}
\label{app:prooflm3}
We start with assumption $(GA1)$. It holds $\hat{\theta}_{j}(x)\in K_{j}$ if and only if $\hat{\theta}_{i,j}(x)>0$ for at least $\gamma$ indices $i\in\{1,\ldots,s\}$. Due to assumption $(RA2)$, the latter holds if and only if $p_{i,j}(x)<d_{j}$ for at least $\gamma$ indices $i\in\{1,\ldots,s\}$, which is equivalent to $1-p_{(\gamma),j}(x)>1-d_{j}$. Furthermore, $T_{j}(x)\in\Gamma_{j}(\alpha)$ is equivalent to $1-p_{(\gamma),j}(x)>F_{\mathrm{Beta}(s-\gamma+1,1)}^{-1}(1-\alpha)$, such that for $c_{j}=1-F_{\mathrm{Beta}(s-\gamma+1,1)}(1-d_{j})$, assumption $(GA1)$ is satisfied, i.e. $\{x\in\Omega:\;T_{j}(x)\in\Gamma_{j}(c_{j})\}=\{x\in\Omega:\;\hat{\theta}_{j}(x)\in K_{j}\}$.

For the verification of $(GA2)$ (nested rejection regions), we see that for every $j\in\{1,\ldots,m\}$ and $\alpha'<\alpha$ it holds $F_{\mathrm{Beta}(s-\gamma+1,1)}^{-1}(1-\alpha')\geq F_{\mathrm{Beta}(s-\gamma+1,1)}^{-1}(1-\alpha)$ and therefore $\Gamma_{j}(\alpha')\subseteq\Gamma_{j}(\alpha)$.

To see that $(GA4)$ is fulfilled, let $j\in\{1,\ldots,m\}$ be fixed. We calculate the set of LFCs for $\varphi_{j}$, i.e. the set of parameters $\vartheta\in\Theta$ that yield the supremum in
\begin{equation*}
\underset{\vartheta'\in\Theta:\theta_{j}(\vartheta')\in H_{j}}{\mathrm{sup}}\text{pr}_{\vartheta'}\big(T_{j}(X)\in\Gamma_{j}(\alpha)\big),
\end{equation*}
and show that it does not depend on $\alpha$.

First, it holds $\text{pr}_{\vartheta}\big(T_{j}(X)\in\Gamma_{j}(\alpha)\big)=\text{pr}_{\vartheta}\big(1-p_{(\gamma),j}(X)>F_{\mathrm{Beta}(s-\gamma+1,1)}^{-1}(1-\alpha)\big)$, which is larger the smaller the $p$-values $p_{1,j}(X),\ldots,p_{s,j}(X)$ (stochastically) are. For every $i=1,\ldots,s$, due to $(RA4)$, there exist parameters $\vartheta_{i}\in\Theta$, independent of $\alpha$, such that $p_{i,j}(X)=0$ almost surely under $\vartheta_{j}$. Independently of $\alpha$, this is satisfied for parameters with $\theta_{i,j}(\vartheta_{i})$ large enough. It is clear, that for any LFC $\vartheta_{0}\in\Theta$ for $\varphi_{j}$, it has to hold $\theta_{j}(\vartheta_{0})\in H_{j}$ and $\theta_{i,j}(\vartheta_{0})$ large enough (without loss of generality equal to $\infty$) for $\gamma-1$ indices $i$.

Without loss of generality, we consider a parameter $\vartheta_{0}\in\Theta$ with $\theta_{j}(\vartheta_{0})\in H_{j}$ and $\theta_{1,j}(\vartheta_{0})=\cdots=\theta_{\gamma-1,j}(\vartheta_{0})=\infty$, leaving $\theta_{i,j}(\vartheta_{0})\leq 0$ for the remaining indices $i=\gamma,\ldots,s$.

Due to assumption $(RA4)$, the $p$-values $p_{1,j}(X),\ldots,p_{\gamma-1,j}(X)$ are equal to zero and $T_{j}(X)=1-p_{(\gamma),j}(X)=\mathrm{max}\{1-p_{\gamma,j}(X),\ldots,1-p_{s,j}(X)\}$ almost surely under $\vartheta_{0}$. We obtain that
\begin{align}
&\text{pr}_{\vartheta_{0}}\big(T_{j}(X)\in\Gamma_{j}(\alpha)\big)\nonumber\\
=&\text{pr}_{\vartheta_{0}}\big(\mathrm{max}\{1-p_{\gamma,j}(X),\ldots,1-p_{s,j}(X)\}>F_{\mathrm{Beta}(s-\gamma+1,1)}^{-1}(1-\alpha)\big)\nonumber\\
=&1-\text{pr}_{\vartheta_{0}}\big(\mathrm{max}\{1-p_{\ell,j}(X): \gamma \leq \ell \leq s\} 
\leq F_{\mathrm{Beta}(s-\gamma+1,1)}^{-1}(1-\alpha)\big)\label{eq:replicabilityga1}.
\end{align}
Since the studies are independent, \eqref{eq:replicabilityga1} is equal to
\begin{align}
&1-\prod_{i=\gamma}^{s}\text{pr}_{\vartheta_{0}}\big(1-p_{i,j}(X)\leq
F_{\mathrm{Beta}(s-\gamma+1,1)}^{-1}(1-\alpha)\big)\nonumber\\
=&1-\prod_{i=\gamma}^{s}\Big[1-\text{pr}_{\vartheta_{0}}\big(p_{i,j}(X)<1-F_{\mathrm{Beta}(s-\gamma+1,1)}^{-1}(1-\alpha)\big)\Big]\label{eq:replicabilityga2}.
\end{align}
Now, using the relation in \eqref{eq:newLFCGamma}, the term in \eqref{eq:replicabilityga2} equals
\begin{equation*}
1-\prod_{i=u}^{s}\Big[1-\text{pr}_{\vartheta_{0}}\big(T_{i,j}(X)\in\Gamma_{i,j}(\alpha_{i,j})\big)\Big],
\end{equation*}
where $\alpha_{i,j}=1-F_{\mathrm{Beta}(s-u+1,1)}^{-1}(1-\alpha)$, which is maximized if each term $\text{pr}_{\vartheta_{0}}\big(T_{i,j}(X)\in\Gamma_{i,j}(\alpha_{i,j})\big)$ is maximized over the set of all $\vartheta_{0}:\theta_{i,j}(\vartheta_{0})\leq 0\ (i=\gamma,\ldots,s)$. Due to assumption $(RA2)$, this is the case for any $\vartheta_{0}\in\Theta$ with $\theta_{i,j}(\vartheta_{0})=0$ independently of $\alpha_{i,j}\ (i=\gamma,\ldots,s)$, such that $\text{pr}_{\vartheta_{0}}\big(T_{j}(X)\in\Gamma_{j}(\alpha)\big)$ is being maximized by any parameter $\vartheta_{0}$ with
\begin{equation*}
\theta_{j}(\vartheta_{0})=(\underbrace{\infty,\ldots,\infty}_{\gamma-1},\underbrace{0,\ldots,0}_{s-\gamma+1})
\end{equation*}  
independently of $\alpha$.

Altogether, the set of LFCs for $\varphi_{j}$ is
\begin{equation*}
\{\vartheta\in\Theta:\theta_{j}(\vartheta)\;\text{is any permutation of}\;(\underbrace{\infty,\ldots,\infty}_{\gamma-1},\underbrace{0,\ldots,0}_{s-\gamma+1})\},
\end{equation*}
hence, obviously independent of $\alpha$.

Finally, we verify $(GA3)$ as follows: For every $j\in \{1,\ldots,m\}$ and $\alpha\in(0,1)$, it holds 
\begin{align} 
&\;\underset{\vartheta\in\Theta:\theta_{j}(\vartheta)\in H_{j}}{\mathrm{sup}}\text{pr}_{\vartheta}(T_{j}(X)\in\Gamma_{j}(\alpha))=\text{pr}_{\vartheta_{0}}\big(T_{j}(X)\in\Gamma_{j}(\alpha)\big)\\
=&\text{pr}_{\vartheta_{0}}\big[\mathrm{max}\{1-p_{\ell,j}(X): \gamma \leq \ell \leq s\}\geq F_{\mathrm{Beta}(s-\gamma+1,1)}^{-1}(1-\alpha)\big],\label{eq:replicabilityga3}
\end{align}
due to $(RA4)$, where $\vartheta_{0}\in\Theta$ with $\theta_{j}(\vartheta_{0})=(\infty,\ldots,\infty,0,\ldots,0)$ is an LFC for $\varphi_{j}$.

Furthermore, $1-p_{i,j}(X)$ is uniformly distributed on $[0,1]$ under an LFC $\vartheta_{0}\in\Theta$ with $\theta_{i,j}(\vartheta_{0})=0\ (i=\gamma,\ldots,s)$. Since $\mathrm{max}(U_{1},\ldots,U_{k})$ is  
$\mathrm{Beta}(s-\gamma+1,1)$-distributed, for $U_{1},\ldots,U_{k}$, that are stochastically independent and identically, uniformly distributed on $[0,1]$, we obtain that 
\eqref{eq:replicabilityga3} equals $1-F_{\mathrm{Beta}(s-\gamma+1,1)}\big(F_{\mathrm{Beta}(s-\gamma+1,1)}^{-1}(1-\alpha)\big)=\alpha$, as desired.

\subsection*{Proof of Theorem~\ref{thm:4}}
\label{app:proofthm4}
We want to show, that
\begin{equation}
\label{eq:thm4eq1}
T_{j}(X)^{(\vartheta)}\leq_{\mathrm{hr}}T_{j}(X)^{(\vartheta_{0})}
\end{equation}  
holds for any parameters $\vartheta\in\Theta$ with $\theta_{j}(\vartheta)\in H_{j}$ and $\vartheta_{0}$ an LFC for $\varphi_{j}$. Let $\vartheta\in\Theta$ with $\theta_{j}(\vartheta)\in H_{j}$, i.e. $\theta_{i,j}(\vartheta)\leq 0$ for at least $s-\gamma+1$ indices $i$, be given. Since the distribution of $T_{j}(X)$ does not depend on the particular form of the LFC $\vartheta_{0}$, we choose an LFC that fulfills $\theta_{i,j}(\vartheta)\leq 0=\theta_{i,j}(\vartheta_{0})$ for at least $s-\gamma+1$ indices $i$. Without loss of generality, let $\theta_{i,j}(\vartheta)\leq 0\ (i=1,\ldots,s-\gamma+1)$, and 
\begin{equation*}
\theta_{j}(\vartheta_{0})=(\underbrace{0,\ldots,0}_{s-\gamma+1},\underbrace{\infty,\ldots,\infty}_{\gamma-1}).
\end{equation*}

For $i=1,\ldots,s-\gamma+1$, it is $\theta_{i,j}(\vartheta)\leq 0=\theta_{i,j}(\vartheta_{0})$, and therefore $T_{i,j}(X)^{(\vartheta)}\leq_{\mathrm{hr}}T_{i,j}(X)^{(\vartheta_{0})}$. Let $F_{i,j}$ be the cumulative distribution function of $T_{i,j}(X)$ under an LFC for $\varphi_{i,j}$, i.e. under a $\tilde{\vartheta}\in\Theta$ with $\theta_{i,j}(\tilde{\vartheta})=0$. For $i=1,\ldots,s-\gamma+1$, it holds $\theta_{i,j}(\vartheta_{0})=0$, i.e. the parameter $\vartheta_{0}$ is an LFC for $\varphi_{i,j}\ (i=1,\ldots.,s-\gamma+1)$. From Part $3$ in Theorem~\ref{thm:hazardorder}, it follows that
\begin{equation}
\label{eq:thm4eq2}
1-p_{i,j}(X)=F_{i,j}\big(T_{i,j}(X)\big)^{(\vartheta)}\leq_{\mathrm{hr}}F_{i,j}\big(T_{i,j}(X)\big)^{(\vartheta_{0})}.
\end{equation}
Note that $F_{i,j}\big(T_{i,j}(X)\big)$ is uniformly distributed on $[0,1]$ under $\vartheta_{0}$, $(i=1,\ldots,s-u+1)$.

For ease of notation, we write $P_{i}=1-p_{i,j}$ and $T_{j}(X)=1-p_{(\gamma),j}(X)=P_{(s-\gamma+1)}(X)$. Under $\vartheta_{0}$ it then holds $T_{j}(X)$ and $\mathrm{max}\{U_{1},\ldots,U_{s-\gamma+1}\}$ are identically distributed, since $P_{s-\gamma+2}(X)=\cdots=P_{s}(X)=1$ almost surely due to $(RA4)$, where $U_{1},\ldots,U_{s-\gamma+1}$ are stochastically independent and identically, uniformly distributed on $[0,1]$.

Now, \eqref{eq:thm4eq1} is equivalent to $P_{(s-\gamma+1:n)}(X)^{(\vartheta)}\leq_{\mathrm{hr}}P_{(s-\gamma+1:s)}(X)^{(\vartheta_{0})}\sim U_{(s-\gamma+1:s-\gamma+1)}$, which follows directly from Lemma~\ref{lm:hazard}, since, from \eqref{eq:thm4eq2}, it holds $P_{i}(X)^{(\vartheta)}\leq_{\mathrm{hr}}U_{i}$ for at least $s-\gamma+1$ indices $i\in\{1,\ldots,s\}$.

\section{Further simulation results}
\label{sup:sec:furthersimulations}

The results of our Monte Carlo simulation with regard to the standard deviations, cf. the end of 
Section~\ref{sec:pi0simulation}, are listed in Table~\ref{tab:pi0stdLFC} and Table~\ref{tab:pi0stdrand} for the utilization of the LFC-based and the randomized $p$-values, respectively.

Furthermore, we looked at two different approaches for defining the LFC-based $p$-values. The test statistics $T_{j}(X)=1-p_{(\gamma),j}$ do not regard the size of the $s-\gamma$ larger $p$-values $p_{(\gamma+1)},\ldots,p_{(s)}$ explicitly. Instead, one could consider
\begin{equation*}
T_{j}^{(S)}(X)=(s-\gamma+1)^{-1/2}\sum_{i=\gamma}^{s}\Phi^{-1}\big(1-p_{(i),j}(X)\big),\quad\Gamma_{j}^{(S)}(\alpha)=\big(\Phi^{-1}(1-\alpha),\infty\big),
\end{equation*}
or 
\begin{equation*}
T_{j}^{(F)}(X)=-2\sum_{i=\gamma}^{s}\mathrm{log}(p_{(i),j}(X)),\quad\Gamma_{j}^{(F)}(\alpha)=\big(F_{\chi^{2}_{2~(s-\gamma+1)}}^{-1}(1-\alpha),\infty\big),
\end{equation*}
motivated by the Stouffer method and the Fisher method for combining $p$-values, respectively, where $\Phi$ is the cumulative distribution function of the standard normal distribution in $\mathbb{R}$, and $F_{\chi^{2}_{2~(s-\gamma+1)}}$ is the cumulative distribution function of a $\chi^{2}$-distribution with $2~(s-\gamma+1)$ degrees of freedom \citep[Sec. 2.2]{benjamini2008screening}. \cite{benjamini2008screening} showed that applying the Benjamini--Hochberg linear step up test from \cite{benjamini1995controlling} on the LFC-based $p$-values $p_{1}^{LFC},\ldots,p_{m}^{LFC}$ controls the false discovery rate even if the $p$-values within each study admit a positive dependence. For more details see Theorem $3$ in \cite{benjamini2008screening}.

Models based on these test statistics, however, do not fulfill assumption $(GA1)$ from Section~\ref{sec:modelsetup}, such that Theorem~\ref{thm:1} does not apply, and calculating the randomized $p$-values $p_{1}^{rand},\ldots,p_{m}^{rand}$ as in Definition~\ref{def:randomized} becomes more difficult.

We simulated the expected values of the estimator $\hat{\pi}_{0}(1/2)$ when utilizing the LFC-based $p$-values under these alternative test statistics. The results of the Monte Carlo simulations with $10,000$ repetitions can be found in Table~\ref{tab:stouffer} for the Stouffer-based and Table~\ref{tab:fisher} for the Fisher-based $p$-values. More accurate estimations as compared to $\hat{\pi}_{0}^{rand}$ are written in bold. Compared to the expected values when utilizing our randomized $p$-values $(p_{j}^{rand})_{j}$ both alternatives only perform better in case of $\mu_{\mathrm{min}}=0$ and lower $\gamma$ ($2,4,6$ for Stouffer, and $2,4$ for Fisher).

\begin{table}[htbp]
\caption{Empirical standard deviations for $\hat{\pi}_{0}(1/2)$ using $(p_{j}^{LFC})_{j=1,\ldots,m}$ in Model 1 with $s=10$, resulting from a Monte Carlo simulation with $10,000$ repetitions}
  \centering
    \begin{tabular}{lrrrrr}
    $\gamma=2$   & \multicolumn{1}{l}{$\pi_{0}$} & 0.6   & 0.7   & 0.8   & 0.9 \\
    $(\mu_{\mathrm{min}},\mu_{\mathrm{max}})$ &       &       &       &       &  \\
    (0,2) &       & 0.07582752 & 0.08296189 & 0.08786311 & 0.09318406 \\
    (-0.5,3) &       & 0.06955630 & 0.07470407 & 0.08089620 & 0.08561848 \\
    (-1,4) &       & 0.06144261 & 0.06616704 & 0.07076652 & 0.07451185 \\
    (-1.5,5) &       & 0.05453308 & 0.05847278 & 0.06289423 & 0.06657209 \\
          &       &       &       &       &  \\
    $\gamma=4$   & \multicolumn{1}{l}{$\pi_{0}$} & 0.6   & 0.7   & 0.8   & 0.9 \\
    $(\mu_{\mathrm{min}},\mu_{\mathrm{max}})$ &       &       &       &       &  \\
    (0,2) &       & 0.06851676 & 0.07402335 & 0.07861075 & 0.08404400 \\
    (-0.5,3) &       & 0.05976733 & 0.06390461 & 0.06858124 & 0.07387649 \\
    (-1,4) &       & 0.05125430 & 0.05480785 & 0.05943488 & 0.06301813 \\
    (-1.5,5) &       & 0.04496894 & 0.04844289 & 0.05183053 & 0.05536219 \\
          &       &       &       &       &  \\
    $\gamma=6$   & \multicolumn{1}{l}{$\pi_{0}$} & 0.6   & 0.7   & 0.8   & 0.9 \\
    $(\mu_{\mathrm{min}},\mu_{\mathrm{max}})$ &       &       &       &       &  \\
    (0,2) &       & 0.06060406 & 0.06547952 & 0.06891937 & 0.07425530 \\
    (-0.5,3) &       & 0.05168182 & 0.05610461 & 0.05966287 & 0.06261066 \\
    (-1,4) &       & 0.04385788 & 0.04703952 & 0.05022082 & 0.05353003 \\
    (-1.5,5) &       & 0.03789674 & 0.04156683 & 0.04366071 & 0.04636814 \\
          &       &       &       &       &  \\
    $\gamma=8$   & \multicolumn{1}{l}{$\pi_{0}$} & 0.6   & 0.7   & 0.8   & 0.9 \\
    $(\mu_{\mathrm{min}},\mu_{\mathrm{max}})$ &       &       &       &       &  \\
    (0,2) &       & 0.06101877 & 0.06274993 & 0.06467866 & 0.06596085 \\
    (-0.5,3) &       & 0.05000569 & 0.05270983 & 0.05467924 & 0.05597926 \\
    (-1,4) &       & 0.04252793 & 0.04467561 & 0.04602834 & 0.04805289 \\
    (-1.5,5) &       & 0.03655808 & 0.0375914 & 0.03910888 & 0.04120146 \\
          &       &       &       &       &  \\
    $\gamma=10$  & \multicolumn{1}{l}{$\pi_{0}$} & 0.6   & 0.7   & 0.8   & 0.9 \\
    $(\mu_{\mathrm{min}},\mu_{\mathrm{max}})$ &       &       &       &       &  \\
    (0,2) &       & 0.07542883 & 0.07593555 & 0.07524137 & 0.07511923 \\
    (-0.5,3) &       & 0.06533901 & 0.06485274 & 0.06546255 & 0.06497500 \\
    (-1,4) &       & 0.05600993 & 0.05558931 & 0.05601106 & 0.05565804 \\
    (-1.5,5) &       & 0.04845610 & 0.04821363 & 0.04870596 & 0.04752345 \\
    \end{tabular}%
  \label{tab:pi0stdLFC}%
\end{table}%
\begin{table}[htbp]
\caption{Empirical standard deviations for $\hat{\pi}_{0}(1/2)$ using $(p_{j}^{rand})_{j=1,\ldots,m}$ in Model $1$ with $s=10$, resulting from a Monte Carlo simulation with $10,000$ repetitions}
  \centering
    \begin{tabular}{lrrrrr}
    $\gamma=2$   & \multicolumn{1}{l}{$\pi_{0}$} & 0.6   & 0.7   & 0.8   & 0.9 \\
    $(\mu_{\mathrm{min}},\mu_{\mathrm{max}})$ &       &       &       &       &  \\
    (0,2) &       & 0.07581577 & 0.08296278 & 0.08805736 & 0.09328741 \\
    (-0.5,3) &       & 0.0702544 & 0.07536504 & 0.08154187 & 0.08638383 \\
    (-1,4) &       & 0.06395161 & 0.06911728 & 0.07406993 & 0.07819693 \\
    (-1.5,5) &       & 0.06221166 & 0.06620503 & 0.07202633 & 0.07624616 \\
          &       &       &       &       &  \\
    $\gamma=4$   & \multicolumn{1}{l}{$\pi_{0}$} & 0.6   & 0.7   & 0.8   & 0.9 \\
    $(\mu_{\mathrm{min}},\mu_{\mathrm{max}})$ &       &       &       &       &  \\
    (0,2) &       & 0.06958866 & 0.07479425 & 0.07966822 & 0.08502931 \\
    (-0.5,3) &       & 0.06402867 & 0.06926745 & 0.07450052 & 0.07952569 \\
    (-1,4) &       & 0.0627167 & 0.06769249 & 0.07350674 & 0.07756825 \\
    (-1.5,5) &       & 0.06598907 & 0.07154958 & 0.07621329 & 0.08084872 \\
          &       &       &       &       &  \\
    $\gamma=6$   & \multicolumn{1}{l}{$\pi_{0}$} & 0.6   & 0.7   & 0.8   & 0.9 \\
    $(\mu_{\mathrm{min}},\mu_{\mathrm{max}})$ &       &       &       &       &  \\
    (0,2) &       & 0.06774445 & 0.07290022 & 0.07769824 & 0.0821969 \\
    (-0.5,3) &       & 0.06669294 & 0.07219919 & 0.077621 & 0.08185157 \\
    (-1,4) &       & 0.07012927 & 0.07496831 & 0.08007004 & 0.08618822 \\
    (-1.5,5) &       & 0.07328739 & 0.07785646 & 0.08384216 & 0.08907628 \\
          &       &       &       &       &  \\
    $\gamma=8$   & \multicolumn{1}{l}{$\pi_{0}$} & 0.6   & 0.7   & 0.8   & 0.9 \\
    $(\mu_{\mathrm{min}},\mu_{\mathrm{max}})$ &       &       &       &       &  \\
    (0,2) &       & 0.07937488 & 0.08193387 & 0.0860587 & 0.0897265 \\
    (-0.5,3) &       & 0.0774249 & 0.08324976 & 0.08713005 & 0.09173463 \\
    (-1,4) &       & 0.07788223 & 0.08412603 & 0.08823772 & 0.09349471 \\
    (-1.5,5) &       & 0.07831725 & 0.0835687 & 0.08916617 & 0.09328771 \\
          &       &       &       &       &  \\
    $\gamma=10$  & \multicolumn{1}{l}{$\pi_{0}$} & 0.6   & 0.7   & 0.8   & 0.9 \\
    $(\mu_{\mathrm{min}},\mu_{\mathrm{max}})$ &       &       &       &       &  \\
    (0,2) &       & 0.09869885 & 0.09828208 & 0.09745444 & 0.09849114 \\
    (-0.5,3) &       & 0.09665863 & 0.09552648 & 0.0971173 & 0.09756151 \\
    (-1,4) &       & 0.09492729 & 0.09507453 & 0.09681341 & 0.09597547 \\
    (-1.5,5) &       & 0.09626054 & 0.09466757 & 0.09469119 & 0.09489155 \\
    \end{tabular}%
  \label{tab:pi0stdrand}%
\end{table}%
\begin{table}[htbp]
    \caption{Expected values of $\hat{\pi}_{0}(1/2)$ using $(p_{j}^{(S)})_{j=1,\ldots,m}$ under Model $1$ with $s=10$. Values result from Monte Carlo simulations with $10,000$ repetitions. Values that come closer to the true proportion $\pi_{0}$ than under the use of our randomized $p$-values are written in bold.}
  \centering
    \begin{tabular}{lrrrrr}
    $\gamma=2$   & \multicolumn{1}{l}{$\pi_{0}$} & 0.6   & 0.7   & 0.8   & 0.9 \\
    $(\mu_{\mathrm{min}},\mu_{\mathrm{max}})$ &       &       &       &       &  \\
    (-0,2) &       & \textbf{0.6646} & \textbf{0.7746} & \textbf{0.8859} & \textbf{0.9968} \\
    (-0.5,3) &       & 0.9636 & 1.1246 & 1.2855 & 1.4458 \\
    (-1,4) &       & 1.1254 & 1.3129 & 1.5002 & 1.6878 \\
    (-1.5,5) &       & 1.1809 & 1.3777 & 1.5746 & 1.7714 \\
          &       &       &       &       &  \\
    $\gamma=4$   & \multicolumn{1}{l}{$\pi_{0}$} & 0.6   & 0.7   & 0.8   & 0.9 \\
    $(\mu_{\mathrm{min}},\mu_{\mathrm{max}})$ &       &       &       &       &  \\
    (-0,2) &       & \textbf{0.7806} & \textbf{0.9095} & \textbf{1.0402} & \textbf{1.1699} \\
    (-0.5,3) &       & 1.005 & 1.1721 & 1.3395 & 1.5072 \\
    (-1,4) &       & 1.1287 & 1.3166 & 1.5047 & 1.6928 \\
    (-1.5,5) &       & 1.1775 & 1.3737 & 1.5697 & 1.7661 \\
          &       &       &       &       &  \\
    $\gamma=6$   & \multicolumn{1}{l}{$\pi_{0}$} & 0.6   & 0.7   & 0.8   & 0.9 \\
    $(\mu_{\mathrm{min}},\mu_{\mathrm{max}})$ &       &       &       &       &  \\
    (-0,2) &       & \textbf{0.8837} & \textbf{1.0281} & \textbf{1.1737} & \textbf{1.3187} \\
    (-0.5,3) &       & 1.0401 & 1.2114 & 1.3836 & 1.5556 \\
    (-1,4) &       & 1.1322 & 1.3196 & 1.5077 & 1.6947 \\
    (-1.5,5) &       & 1.1742 & 1.3692 & 1.5636 & 1.7582 \\
          &       &       &       &       &  \\
    $\gamma=8$   & \multicolumn{1}{l}{$\pi_{0}$} & 0.6   & 0.7   & 0.8   & 0.9 \\
    $(\mu_{\mathrm{min}},\mu_{\mathrm{max}})$ &       &       &       &       &  \\
    (-0,2) &       & 1.0014 & 1.1511 & 1.301 & 1.4509 \\
    (-0.5,3) &       & 1.0872 & 1.2572 & 1.4286 & 1.5989 \\
    (-1,4) &       & 1.1484 & 1.3305 & 1.5131 & 1.6956 \\
    (-1.5,5) &       & 1.1815 & 1.3704 & 1.5593 & 1.7484 \\
          &       &       &       &       &  \\
    $\gamma=10$  & \multicolumn{1}{l}{$\pi_{0}$} & 0.6   & 0.7   & 0.8   & 0.9 \\
    $(\mu_{\mathrm{min}},\mu_{\mathrm{max}})$ &       &       &       &       &  \\
    (-0,2) &       & 1.2141 & 1.3332 & 1.4525 & 1.5724 \\
    (-0.5,3) &       & 1.2064 & 1.3501 & 1.4938 & 1.6377 \\
    (-1,4) &       & 1.2167 & 1.3752 & 1.5335 & 1.6919 \\
    (-1.5,5) &       & 1.2243 & 1.3924 & 1.5608 & 1.7285 \\
    \end{tabular}%
  \label{tab:stouffer}%
\end{table}%
\begin{table}[htbp]
    \caption{Expected values of $\hat{\pi}_{0}(1/2)$ using $(p_{j}^{(F)})_{j=1,\ldots,m}$ under Model $1$ with $s=10$. Values result from Monte Carlo simulations with $10,000$ repetitions. Values that come closer to the true proportion $\pi_{0}$ than under the use of our randomized $p$-values are in bold}
  \centering
    \begin{tabular}{lrrrrr}
    $\gamma=2$   & \multicolumn{1}{l}{$\pi_{0}$} & 0.6   & 0.7   & 0.8   & 0.9 \\
    $(\mu_{\mathrm{min}},\mu_{\mathrm{max}})$ &       &       &       &       &  \\
    (-0,2) &       & \textbf{0.6895} & \textbf{0.8036} & \textbf{0.9193} & \textbf{1.0347} \\
    (-0.5,3) &       & 0.9489 & 1.1068 & 1.2652 & 1.4232 \\
    (-1,4) &       & 1.0946 & 1.2773 & 1.4599 & 1.6418 \\
    (-1.5,5) &       & 1.1567 & 1.3497 & 1.5426 & 1.7353 \\
          &       &       &       &       &  \\
    $\gamma=4$   & \multicolumn{1}{l}{$\pi_{0}$} & 0.6   & 0.7   & 0.8   & 0.9 \\
    $(\mu_{\mathrm{min}},\mu_{\mathrm{max}})$ &       &       &       &       &  \\
    (-0,2) &       & \textbf{0.8337} & \textbf{0.9716} & \textbf{1.1106} & \textbf{1.2498} \\
    (-0.5,3) &       & 1.0134 & 1.182 & 1.3509 & 1.5201 \\
    (-1,4) &       & 1.1143 & 1.2997 & 1.4859 & 1.6715 \\
    (-1.5,5) &       & 1.1606 & 1.3536 & 1.547 & 1.7408 \\
          &       &       &       &       &  \\
    $\gamma=6$   & \multicolumn{1}{l}{$\pi_{0}$} & 0.6   & 0.7   & 0.8   & 0.9 \\
    $(\mu_{\mathrm{min}},\mu_{\mathrm{max}})$ &       &       &       &       &  \\
    (-0,2) &       & 0.9427 & 1.0971 & 1.2528 & 1.4084 \\
    (-0.5,3) &       & 1.0593 & 1.2345 & 1.4104 & 1.5858 \\
    (-1,4) &       & 1.1282 & 1.3158 & 1.5037 & 1.6907 \\
    (-1.5,5) &       & 1.1626 & 1.3561 & 1.5492 & 1.7426 \\
          &       &       &       &       &  \\
    $\gamma=8$   & \multicolumn{1}{l}{$\pi_{0}$} & 0.6   & 0.7   & 0.8   & 0.9 \\
    $(\mu_{\mathrm{min}},\mu_{\mathrm{max}})$ &       &       &       &       &  \\
    (-0,2) &       & 1.0495 & 1.2077 & 1.3659 & 1.5243 \\
    (-0.5,3) &       & 1.1063 & 1.2811 & 1.4565 & 1.6315 \\
    (-1,4) &       & 1.1486 & 1.3328 & 1.5175 & 1.7021 \\
    (-1.5,5) &       & 1.1716 & 1.3616 & 1.5514 & 1.7416 \\
          &       &       &       &       &  \\
    $\gamma=10$  & \multicolumn{1}{l}{$\pi_{0}$} & 0.6   & 0.7   & 0.8   & 0.9 \\
    $(\mu_{\mathrm{min}},\mu_{\mathrm{max}})$ &       &       &       &       &  \\
    (-0,2) &       & 1.2141 & 1.3332 & 1.4525 & 1.5724 \\
    (-0.5,3) &       & 1.2064 & 1.3501 & 1.4938 & 1.6377 \\
    (-1,4) &       & 1.2167 & 1.3752 & 1.5335 & 1.6919 \\
    (-1.5,5) &       & 1.2243 & 1.3924 & 1.5608 & 1.7285 \\
    \end{tabular}%
  \label{tab:fisher}%
\end{table}%

\newpage
\bibliographystyle{elsarticle-harv}
\bibliography{Replicability-arXiv}

\end{document}